\documentclass{iacrtrans}

\usepackage[utf8]{inputenc}
\usepackage{etex}
\usepackage{microtype}

\usepackage{amsmath,amssymb}
\usepackage{algorithm,algorithmicx}
\usepackage{algpseudocode}
\usepackage{booktabs}
\usepackage{mathtools}
\usepackage{xspace}
\usepackage{xcolor}
\usepackage{url}
\usepackage{enumitem}

\let\|\relax
\DeclareMathSymbol{\|}{\mathbin}{symbols}{"6B}

\newcommand{\SSS}{\textup S}

\newcommand{\CC}{\textup C}
\newcommand{\DD}{\mathcal D}
\newcommand{\mathin}{\mathrm{in}}
\newcommand{\mathout}{\mathrm{out}}

\newcommand{\mathfin}{\mathrm{fin}}

\newcommand{\ket}[1]{| #1 \rangle}

\newcommand{\Q}[1]{\text{Q#1}}

\begin{document}

\author{Marc Kaplan\inst{1,2} \and Ga\"etan Leurent\inst{3} \and
Anthony Leverrier\inst{3} \and Mar\'ia Naya-Plasencia\inst{3}}

\institute{LTCI, T\'el\'ecom ParisTech, 23 avenue d'Italie, 75214 Paris CEDEX 13, France
\and
School of Informatics, University of Edinburgh,\\
10 Crichton Street, Edinburgh EH8 9AB, UK
\and
Inria Paris, France}

\title{Quantum Differential and Linear Cryptanalysis}

\maketitle

\keywords{Symmetric cryptography \and Differential cryptanalysis \and Linear cryptanalysis \and Post-quantum cryptography  \and Quantum attacks \and Block ciphers.}

\begin{abstract}
  Quantum computers, that may become available one day, would impact many
  scientific fields, most notably cryptography since many
  asymmetric primitives are insecure against an adversary with
  quantum capabilities.  Cryptographers are already anticipating this
  threat by proposing and studying a number of potentially
  \textit{quantum-safe} alternatives for those primitives.  On the other
  hand, symmetric primitives seem less vulnerable against quantum
  computing: the main known applicable result is Grover's algorithm that gives a
  quadratic speed-up for exhaustive search.

  In this work, we examine more closely the security of symmetric
  ciphers against quantum attacks.  Since our trust in symmetric ciphers
  relies mostly on their ability to resist cryptanalysis
  techniques, we investigate quantum cryptanalysis techniques.  More
  specifically, we consider quantum versions of differential and
  linear cryptanalysis.

  We show that it is usually possible to use quantum computations to obtain
  a quadratic speed-up for these attack techniques, but the situation
  must be nuanced: we don't get a quadratic speed-up for all variants of
  the attacks.  This allows us to demonstrate the following non-intuitive result: the best attack in the classical world does not
  necessarily lead to the best quantum one.  We give some
  examples of application on ciphers LAC and KLEIN.  We also discuss the
  important difference between an adversary that can only perform
  quantum computations, and an adversary that can also make quantum
  queries to a keyed primitive.
\end{abstract}

\section{Introduction}

Large quantum computers would have huge 
consequences in a number of scientific fields. Cryptography would certainly be 
dramatically impacted:
for instance, Shor's factoring algorithm~\cite{DBLP:journals/siamcomp/Shor97} makes asymmetric primitives such as RSA totally insecure in a post-quantum world. 
Even if quantum computers are unlikely to become widely available in the next couple of years, the
cryptographic community has decided to start worrying about this threat and to study its impact. One compelling reason for taking action is that even current pre-quantum long-term secrets are at risk as it seems feasible for a malicious organization to simply store all encrypted data until it has access to a quantum computer. This explains why post-quantum cryptosystems, based for instance on lattices or codes,
 have become a very hot topic in cryptology, and
researchers are now concentrating their efforts in order to provide
efficient alternatives that would resist quantum adversaries.

In this paper, we focus on symmetric cryptography, the other main branch
of cryptography.  Symmetric primitives also suffer from a reduced ideal 
security in the quantum world, but this security reduction turns out to be much less drastic than for many
asymmetric primitives.  So far, the main quantum attack on symmetric algorithms follows from
Grover's algorithm~\cite{DBLP:conf/stoc/Grover96} for searching an unsorted database of size $N$ in $O{(N^{1/2})}$ time.  It can be
applied to any generic exhaustive key search, but merely offers a
quadratic speed-up compared to a classical attack.  Therefore, the current
consensus is that key lengths should be doubled in order to offer the
same security against quantum algorithms.  This was one of the
motivations to require a version of AES with a 256-bit key, that appears in the initial recommendations of the European PQCRYPTO project \cite{augot2015initial}:
\begin{quote}
``Symmetric systems are usually not affected by Shor's algorithm, but they are affected by Grover's algorithm. Under Grover's attack, the best security a key of length $n$ can offer is $2^{n/2}$, so AES-128 offers only $2^{64}$ post-quantum security. PQCRYPTO recommends thoroughly analyzed ciphers with 256-bit keys to achieve $2^{128}$ post-quantum security.''
\end{quote}

Doubling the key length is a useful heuristic, but a more accurate
analysis is definitely called for.  Unfortunately, little work has been
done in this direction.  Only recently, a few results have started to
challenge the security of some symmetric cryptography constructions
against quantum adversaries.  In particular, some works have studied
generic attacks against symmetric constructions, or attacks against
modes of operations.

First, the quantum algorithm of Simon~\cite{simon1997power}, which is
based on the quantum Fourier transform, has been used to obtain a
quantum distinguisher for the 3-round Feistel cipher~\cite{5513654}, to
break the quantum version of the Even-Mansour scheme~\cite{6400943}, and
in the context of quantum related-key attacks \cite{roetteler2015note}.
More recently, the same quantum algorithm has been used to break widely
used block cipher modes of operations for MACs and authenticated
encryption~\cite{simoncrypto} (see also \cite{SS}).  All these attacks have a complexity
linear in the block size, and show that some constructions in
symmetric cryptography are badly broken if an adversary can make quantum
queries.

Kaplan~\cite{DBLP:journals/corr/Kaplan14} has also studied the quantum
complexity of generic meet-in-the-middle attacks for iterated block
ciphers constructions.  In particular, this work shows that having
access to quantum devices when attacking double iteration of block
ciphers can only reduce the time by an exponent 3/2, rather than the
expected quadratic improvement from Grover's algorithm. In consequence,
in stark contrast with classical adversaries, double iteration of block
ciphers can restore the security against quantum adversaries.

These are important steps in the right direction, providing the quantum
algorithms associated to some generic attacks on different
constructions.  These results also show that the situation is more
nuanced than a quadratic speed-up of all classical attacks.  Therefore,
in order to get a good understanding of the actual security of symmetric
cryptography constructions against quantum adversaries, we need to
develop and analyze quantum cryptanalytic techniques.  In particular, a
possible approach to devise new quantum attacks is to \emph{quantize}
classical ones.

{\bfseries\sffamily Security of symmetric key ciphers.}
While the security of crypto-systems in public key cryptography relies on the hardness of some well-understood mathematical problems, the security of symmetric key cryptography is
more heuristic.  Designers argue that a scheme is secure by proving its resistance against some particular attacks. This means that only cryptanalysis and security evaluations can bring confidence in a primitive.  Even when
a primitive has been largely studied, implemented and standardized, it
remains vital to carry on with the cryptanalysis effort using new
methods and techniques. 
Examples of standards that turned out to
be non-secure are indeed numerous (MD5, SHA1, RC4\ldots).
Symmetric security and confidence are therefore
exclusively based on this constant and challenging task of
cryptanalysis. 

Symmetric cryptanalysis relies on a toolbox of classical techniques such as
differential or linear cryptanalysis and their variants,
 algebraic attacks, etc. A cryptanalyst can study the
security of a cipher against those attacks, and evaluate the security
margin of a design using reduced-round versions.  This security margin
(how far the attack is from reaching all the rounds) is a good
measure of the security of a design; it can be used to compare
different designs and to detect whether a cipher is close to being broken.

Since the security of symmetric primitives relies so heavily on
cryptanalysis, it is crucial to evaluate how the availability of
quantum computing affects it, and whether dedicated attacks
can be more efficient than brute-force attacks based on Grover's algorithm.
In particular, we must design the toolbox of symmetric cryptanalysis
in a quantum setting in order to understand the security of symmetric
algorithms against quantum adversaries. 
In this paper, we consider quantum versions of cryptanalytic attacks for the first time\footnote{Previous results as~\cite{DBLP:journals/corr/Kaplan14,5513654,6400943} only consider quantizing generic attacks.},
evaluating how an adversary can perform some of the main attacks
on symmetric ciphers with a quantum computer.

{\bfseries\sffamily Modeling quantum adversaries.}  Following the
notions for PRF security in a quantum setting given by
Zhandry~\cite{DBLP:conf/focs/Zhandry12}, we consider two different
models for our analysis:
\begin{description}
\item[Standard security:] a block cipher is \emph{standard secure} against
quantum adversaries if no efficient quantum algorithm can distinguish the
block cipher from PRP (or a PRF) by making only \emph{classical}
queries (later denoted as $\Q1$).
\item[Quantum security:] a block cipher is \emph{quantum secure} against
quantum adversaries if no efficient quantum algorithm can distinguish the
block cipher from PRP (or a PRF) even by making \emph{quantum} queries
(later denoted as $\Q2$).
\end{description}

A $\Q1$ adversary collects data classically and processes them with quantum
operations, while a $\Q2$ adversary can directly query the cryptographic oracle
with a quantum superposition of classical inputs, and
receives the superposition of the corresponding outputs. 
The adversary, in the second model, is very powerful. Nevertheless, it is possible
to devise secure protocols against these attacks. In particular,
the model was used in~\cite{DBLP:conf/crypto/BonehZ13},
where quantum-secure signatures were introduced.
Later, the same authors showed how to construct message authentification codes
secure against~$\Q2$ adversaries~\cite{DBLP:conf/eurocrypt/BonehZ13}.
It was
also investigated in \cite{DBLP:conf/icits/DamgardFNS13} for secret-sharing schemes.
This model is also mathematically well defined, and it is 
convenient to use it to give security definitions against quantum adversaries,
a task that is often challenging~\cite{gagliardoni2015semantic}.
A more practical issue is that
even if the cryptographic oracle is designed to produce classical outcomes, 
its implementation may use some technology, for example optical fibers,
that a quantum adversary could exploit.
In practice, ensuring that only classical queries are allowed seems difficult,
especially in a world in which quantum resources become available.
It seems more promising to assume that security against quantum queries is not
granted and to study security in this model.

{\bfseries\sffamily Modes of operation.}
Block ciphers are typically used in a mode of operation, in order to
accommodate messages of variable length and to provide a specific
security property (confidentiality, integrity\ldots).  In classical
cryptography, we prove that modes of operations are secure, assuming
that the block cipher is secure, and we trust the block ciphers after
enough cryptanalysis has been performed.  We can do the same against
quantum adversaries, but proofs of security in the classical model do
not always translate to proofs of security in the quantum model.  In
particular, common MAC and AE modes secure in the classical model have
recently been broken with a Q2 attack~\cite{simoncrypto}.  On the other
hand, common encryption modes have been proven secure in the quantum
model~\cite{DBLP:conf/pqcrypto/AnandTTU16}, assuming either a
standard-secure PRF or a quantum-secure PRF.  In this work, we focus on the security of block
ciphers, but this analysis should be combined with an analysis of the
quantum security of modes of operation to get a full understanding of the
security of symmetric cryptography in the quantum model.

{\bfseries\sffamily Our results.}
We choose to focus here on differential cryptanalysis, the 
truncated differential variant, and on linear cryptanalysis.  We give for the first time a  synthetic description of
these attacks, and study how they are affected by the
availability of quantum computers.  As expected, we often get a quadratic
speed-up, but not for all attacks.

. 

In this work we use the concept of quantum walks to devise quantum attacks. This framework contains a lot of well known quantum algorithms such as Grover's search or Ambainis' algorithm for
element distinctness. More importantly, it allows one to compose these algorithms in the same way	 as classical algorithms can be composed. In order to keep our quantum attacks as simple as possible, we use a slightly modified Grover's search algorithm that can use quantum checking procedures. This simple trick comes at the cost of constant factors (ignored in our analysis), but a more involved approach, making better use of quantum walks may remove those additional factors.

We prove the following non-obvious results:
\begin{itemize}[noitemsep]
\item Differential cryptanalysis and linear cryptanalysis usually offer a \emph{quadratic gain} in
  the Q2 model over the classical model.
\item Truncated differential cryptanalysis, however, usually offers \emph{smaller
  gains} in the Q2 model.
\item Therefore, the optimal quantum attack is not always a quantum
  version of the optimal classical attack.
\item In the Q1 model, cryptanalytic attacks might offer \emph{little
  gain} over the classical model when the key-length is the same as the
  block length (\emph{e.g.}~AES-128).
\item But the gain of cryptanalytic attacks in the Q1
  model can be quite significant (\emph{similar to the Q2 model})
  when the key length is longer (\emph{e.g.}~AES-256).
\end{itemize}

The rest of the paper is organized as follows. We first present some
preliminaries on the classical (Section~\ref{sec:brute}) and quantum (Section~\ref{sec:algoquant}) settings. Section~\ref{sec:differential} treats differential attacks, while Section~\ref{sec:truncated} deals with truncated differential attacks and Section~\ref{sec:application} provides some applications on ciphers LAC and KLEIN. We study linear cryptanalysis in Section \ref{sec:linear}.
In Section~\ref{sec:discussion}, we discuss the obtained results. Section~\ref{sec:conclusion} concludes the paper and presents some open questions.

{\bfseries\sffamily Related work.}
Independently of our work, Zhou, Lu, Zhang and Sun~\cite{Zhou2015} introduced
quantum differential attacks. Their approach is, like ours, based on variations of
quantum search algorithms. In their case, these are quantum counting and quantum maximum finding.
Their results are, however, less general than ours. They only
consider attacks returning a subkey and not the whole key, and only in the Q2 model. Moreover,
the complexity analysis of our quantum attack is more fine-grained: it considers more parameters
which in principle leads to more efficient quantization of explicit classical differential attacks.

\section{Preliminaries}
\label{sec:brute}

In the following, we consider a block cipher $E$, with a blocksize of
$n$ bits, and a keysize of $k$ bits.  We assume that $E$ is an iterated
design with $r$ rounds, and we use $E^{(t)}$ to denote a
reduced version with $t$ rounds (so that $E = E^{(r)}$).
When the cipher $E$ is computed with a specific key $\kappa \in \{0,1\}^k$,
its action on a block $x$ is denoted by $E_\kappa(x)$.
The final goal of an attacker is to find the secret key $\kappa^*$ that was used to encrypt some
data. A query to the cryptographic oracle is denoted $E(x)$, where it is implicitly assumed
that $E$ encrypts with the key $\kappa^*$, \emph{i.e.}, $E(x)=E_{\kappa^*}(x)$.

{\bfseries\sffamily Key-recovery attack.}
The key can always be found using 
a brute-force attack; following our notations, the complexity of such a generic attack is
$2^{k}$. This defines the ideal security, \emph{i.e.}~the security a cipher should provide. Therefore, a cipher is considered {\em broken} if the key can be found ``faster'' than with the brute-force attack, where ``faster'' typically means with ``less encryptions''.
Three parameters define the efficiency of a specific attack.
The \emph{data complexity} is the number of calls to the cryptographic oracle $E(x)$.
The \emph{time complexity} is the time required to recover the key $\kappa^*$.
We consider that 
querying the cryptographic oracle requires one unit of time, so that the data complexity is included in the time complexity.
The \emph{memory complexity} is the memory needed to perform the attack.

{\bfseries\sffamily Distinguishers.}
Another type of attacks, less powerful than key-recovery ones, are distinguishers. Their aim is to distinguish a concrete cipher from an ideal one. A distinguishing attack often gives rise to a key-recovery attack and is always the sign of a weakness of the block cipher.

{\bfseries\sffamily Our scenario.}
In this paper, we consider some of the main families of non-generic attacks that can be a threat to some ciphers: differential and linear attacks. We propose their quantized version for the distinguisher and the last-rounds key-recovery variants of linear, simple differentials and truncated differentials. Our aim is to provide a solid first step towards ``quantizing'' symmetric families of attacks. To reach this objective, due to the technicality of the attacks themselves, and even more due to the technicality of combining them with quantum tools, we consider the most basic versions of the attacks.

{\bfseries\sffamily Success probability.} For the sake of simplicity, in this paper we do not take into account the success probability in the parameters of the attacks. In particular, because it affects in the same way both classical and quantum versions, it is not very useful for the comparison we want to perform. In practice, it would be enough to increase the data complexity by a constant factor to reach any pre-specified success probability. A detailed study of the success probability of statistical attacks can be found in~\cite{DBLP:journals/dcc/BlondeauGT11}.

\section{Quantum algorithms}
\label{sec:algoquant}
We use a number of quantum techniques in order to devise quantum attacks. 
Most of them are based
on well-known quantum algorithms that have been studied extensively over the last decades.
The equivalent to the classical brute-force attack in the quantum world is
to search through the key space using a Grover's search algorithm~\cite{DBLP:conf/stoc/Grover96},
leading to complexity
$2^{k/2}$.
Our goal is to devise quantum attacks that might be a threat to symmetric primitives by displaying a smaller complexity than the generic quantum exhaustive search.

\subsection{Variations on Grover's algorithm}

Although Grover's algorithm is usually presented as a search in an unstructured database, we use in our applications the following
slight generalization (see \cite{santha2008quantum} for a nice exposition on quantum-walk-based search algorithms).
The task is to find a marked element from a set $X$. We denote by $M \subseteq X$ the subset of marked elements and assume that we know a lower bound $\varepsilon$ on the fraction $|M|/|X|$ of marked elements. 
A classical algorithm to solve this problem is to repeat $O(1/\varepsilon)$ times: $(i)$ sample an element from $X$, $(ii)$ check if it is marked.

The cost of this algorithm can therefore be expressed as a function of two parameters: the \emph{Setup cost} \SSS, which is the cost of sampling a uniform element from $X$, and the \emph{Checking cost} \CC, which is the cost of checking if an element is marked.
The cost considered by the algorithm can be the time or the number of queries to the input. It suffices to consider specifically one of those resources when quantifying the Setup and Checking cost.

Similarly, Grover's algorithm~\cite{DBLP:conf/stoc/Grover96} is a quantum search procedure that finds a marked element, and whose complexity can be written as a function of the \emph{quantum Setup cost} \SSS, which is the cost of constructing a uniform superposition of all elements in $X$,
and the \emph{quantum Checking cost} \CC, which is the cost of applying a controlled-phase gate to the marked elements. 
Notice that a classical or a quantum algorithm that checks membership to $M$ can easily be modified to get a controlled-phase.

\begin{theorem}[Grover]
\label{thm:grover}
There exists a quantum algorithm which, with high probability, finds a marked element, if there is any, at cost of order $\frac{\SSS+\CC}{\sqrt{\varepsilon}}$.
\end{theorem}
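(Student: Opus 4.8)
The plan is to derive this as an immediate consequence of the standard amplitude-amplification framework. First I would recall the exact-count version of Grover's algorithm: given an oracle that flips the phase of marked elements and an initial state $\ket{\psi_0}$ that is a uniform superposition over $X$, one writes $\ket{\psi_0} = \sin\theta \ket{\psi_M} + \cos\theta \ket{\psi_{M^c}}$, where $\ket{\psi_M}$ is the normalized projection onto the marked subspace and $\sin^2\theta = |M|/|X| \geq \varepsilon$. The Grover iterate $G$ is a rotation by angle $2\theta$ in the two-dimensional real span of $\ket{\psi_M}$ and $\ket{\psi_{M^c}}$, so after $j$ iterations the amplitude on $\ket{\psi_M}$ is $\sin\bigl((2j+1)\theta\bigr)$. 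Choosing $j \approx \frac{\pi}{4\theta}$ drives this amplitude close to $1$, and measuring then yields a marked element with constant probability.

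The next step is to count the cost in terms of \SSS\ and \CC. Preparing $\ket{\psi_0}$ costs \SSS; each Grover iterate $G = -\ket{\psi_0}\!\bra{\psi_0}\, O_M$ consists of one application of the phase-flip oracle $O_M$ (cost \CC\ by the remark that a membership check converts to a controlled-phase) and one reflection about $\ket{\psi_0}$. The reflection about $\ket{\psi_0}$ can be implemented as $\SSS \cdot (2\ket{0}\!\bra{0} - I) \cdot \SSS^{-1}$, i.e.\ at cost $O(\SSS)$ (assuming the setup is reversible, as is standard for unitary quantum circuits). Hence each iterate costs $O(\SSS + \CC)$, and the number of iterates is $O(1/\theta) = O(1/\sqrt{\varepsilon})$ since $\theta \geq \sin\theta \cdot (\text{const})$ for $\theta$ bounded away from $\pi/2$, and $\sin\theta \geq \sqrt{\varepsilon}$. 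Adding the initial setup, the total cost is $O\bigl(\SSS + (\SSS+\CC)/\sqrt{\varepsilon}\bigr) = O\bigl((\SSS+\CC)/\sqrt{\varepsilon}\bigr)$, which is the claimed bound.

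The one genuine subtlety — and the step I would be most careful about — is that the simple fixed-iteration-count argument requires knowing $\theta$ (equivalently $|M|/|X|$) exactly, whereas here we only have a lower bound $\varepsilon$; moreover, if $M$ is empty the procedure must not produce a false positive, and if $|M|/|X|$ is large the "overshoot" can ruin the success probability. The standard fix is to use the Boyer--Brassard--H\o{}yer--Tapp variant: run Grover with a number of iterations chosen uniformly at random in $\{0, 1, \dots, \lceil 1/\sqrt{\varepsilon}\rceil\}$, which guarantees constant success probability whenever $|M|/|X| \geq \varepsilon$, and always verify the output with one extra membership check so that an empty $M$ is detected. This still costs $O((\SSS+\CC)/\sqrt{\varepsilon})$. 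I would state this explicitly (or cite \cite{santha2008quantum} and the original amplitude-amplification references) rather than rely on the idealized exact-angle argument. The remaining details — reversibility of the setup, the constant in the $O(\cdot)$, and amplifying the constant success probability to "high probability" by $O(1)$ repetitions — are routine and I would not belabor them.
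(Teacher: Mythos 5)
Your proof is correct; note that the paper itself states this theorem without proof, as a known result cited from Grover's original work and Santha's survey of quantum-walk-based search, so there is no in-paper argument to diverge from. Your derivation — the two-dimensional rotation picture, the cost accounting of $\SSS$ per reflection about the initial state and $\CC$ per oracle call, and the Boyer--Brassard--H{\o}yer--Tapp randomized-iteration fix for the unknown marked fraction together with a final membership check to handle $M=\emptyset$ — is exactly the standard argument those references supply, including the one genuine subtlety (only a lower bound $\varepsilon$ on $|M|/|X|$ is known) that a naive fixed-iteration analysis would miss.
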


In particular, the setup and the checking steps can themselves be quantum procedures. 
Assume for instance that 
the set $X$ is itself a subset of a larger set $\tilde X$. Grover's algorithm can then find an element $x\in X$ at a cost $({\tilde X / X})^{1/2}$, assuming that the setup and checking procedures are easy. Moreover, a closer look at the algorithm shows that if
one ignores the final measurement that returns one element, the algorithm produces a uniform superposition of the elements in $X$, which can be used to setup another Grover search.

Grover's algorithm can also be written as a special case of amplitude amplification,
a quantum technique introduced by Brassard, H\o yer and Tapp in order to boost the success probability of quantum algorithms~\cite{MR1947332}.
Intuitively, assume that a quantum algorithm $\mathcal A$ produces a 
superposition of outputs in a good subspace $G$ and outputs in a bad subspace $B$.
Then there exists a quantum algorithm that calls $\mathcal A$ as a subroutine
to amplify the amplitude of good outputs.

If $\mathcal A$ was a classical algorithm, repeating it $\Theta(1/a)$, where $a$ is the probability
of producing a good output, would lead to a new algorithm
with constant success probability. Just as Grover's algorithm, the amplitude amplification technique achieves
the same result with a quadratic improvement~\cite{MR1947332}. The intuitive reason is that quantum operations allow to amplify
the amplitudes of good output states, and that the corresponding probabilities are given by the squares of the amplitudes. Therefore, the amplification is quadratically faster than in the classical case.

\begin{theorem}[Amplitude amplification]
\label{algo:AA}
Let $\mathcal A$ be a quantum algorithm that, with no measurement, produces a superposition
$\sum_{x\in G} \alpha_x \ket x + \sum_{y\in B} \alpha_y \ket y$. 
Let $a=\sum_{x\in G} \vert \alpha_x \vert^2$ be the probability of obtaining, after measurement, a state in the \emph{good}
subspace $G$.

Then, there exists a quantum algorithm that calls $\mathcal A$ and $\mathcal A^{-1}$ as subroutines $\Theta(1/\sqrt{a})$ times and produces
an outcome $x\in G$ with a probability at least $\max(a,1-a)$.
\end{theorem}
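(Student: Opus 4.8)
The plan is to prove this via the standard rotation analysis of amplitude amplification. Since $\mathcal A$ is a unitary quantum operation applied with no measurement, its inverse $\mathcal A^{-1}=\mathcal A^\dagger$ is also available as a subroutine, as the statement allows. Write $\ket\psi=\mathcal A\ket 0$ and decompose it as $\ket\psi=\sin\theta\,\ket{\psi_G}+\cos\theta\,\ket{\psi_B}$, where $\ket{\psi_G}=\frac1{\sqrt a}\sum_{x\in G}\alpha_x\ket x$ and $\ket{\psi_B}=\frac1{\sqrt{1-a}}\sum_{y\in B}\alpha_y\ket y$ are the normalised projections of $\ket\psi$ onto the good and bad subspaces, and $\theta\in(0,\pi/2)$ is defined by $\sin^2\theta=a$. (If $a\in\{0,1\}$ the statement is trivial, so assume $0<a<1$.)

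First I would introduce the two reflections that make up the amplification step. Let $\Pi_G=\sum_{x\in G}\ket x\bra x$ be the projector onto the good subspace, and set $S_\chi=I-2\Pi_G$, the operator that flips the phase of good basis states; this is exactly the controlled-phase obtained from the ``checking'' procedure in the statement. Let $S_0=I-2\ket 0\bra 0$, and define the amplification operator $Q=-\mathcal A S_0\mathcal A^{-1}S_\chi$. The crux of the argument is the lemma that the plane $P=\mathrm{span}\{\ket{\psi_G},\ket{\psi_B}\}$ is invariant under $Q$ and that $Q$ acts on $P$ as a rotation by angle $2\theta$. This is where the real work lies: one checks that $\mathcal A S_0\mathcal A^{-1}=I-2\ket\psi\bra\psi$, so that $-\mathcal A S_0\mathcal A^{-1}$ is the reflection of $P$ about the axis $\ket\psi$, while $S_\chi$ restricted to $P$ is the reflection about the axis $\ket{\psi_B}$; since $\ket\psi,\ket{\psi_G},\ket{\psi_B}$ all lie in $P$, no amplitude leaks out of the plane, and $Q$ is a composition of two reflections of $P$ whose axes make an angle $\theta$, hence a rotation by $2\theta$. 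The main obstacle is carrying out this $2\times2$ geometric computation carefully and verifying the invariance, i.e.\ that $Q$ really does stabilise $P$.

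Granting the lemma, I would conclude as follows. Applying $Q$ a total of $m$ times to $\ket\psi$ gives $Q^m\ket\psi=\sin\bigl((2m+1)\theta\bigr)\ket{\psi_G}+\cos\bigl((2m+1)\theta\bigr)\ket{\psi_B}$, so measuring in the computational basis yields a state in $G$ with probability $\sin^2\bigl((2m+1)\theta\bigr)$. When $a\le 1/2$ (so $\theta\le\pi/4$), choosing $m=\bigl\lfloor\frac{\pi}{4\theta}\bigr\rfloor$ forces $(2m+1)\theta\in\bigl(\frac\pi2-\theta,\frac\pi2+\theta\bigr]$, whence $\sin^2\bigl((2m+1)\theta\bigr)\ge\cos^2\theta=1-a$; and the number of calls to $\mathcal A$ and $\mathcal A^{-1}$ is $2m+1=\Theta(1/\theta)=\Theta(1/\sqrt a)$, using $\theta\ge\sin\theta=\sqrt a$. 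When $a>1/2$, one simply runs $\mathcal A$ once with no amplification and obtains a good outcome with probability $a$. Combining the two cases gives success probability at least $\max(a,1-a)$ with $\Theta(1/\sqrt a)$ uses of $\mathcal A$ and $\mathcal A^{-1}$, as claimed. (If only a lower bound on $a$ is known, the same argument works with $m$ chosen from that bound, at the price of the constant factors the paper already ignores.)
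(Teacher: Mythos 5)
Your proof is correct and is precisely the standard rotation argument of Brassard--H{\o}yer--Mosca--Tapp; the paper does not prove this theorem itself but simply states it with a citation to~\cite{MR1947332}, and your write-up matches the proof given there (decomposition into $\sin\theta\,\ket{\psi_G}+\cos\theta\,\ket{\psi_B}$, the operator $Q=-\mathcal A S_0\mathcal A^{-1}S_\chi$ acting as a rotation by $2\theta$ in the invariant plane, and $m=\lfloor\pi/(4\theta)\rfloor$ iterations giving success probability $\sin^2((2m+1)\theta)\geq 1-a$, with the trivial $a>1/2$ case handled separately). The one caveat is your closing parenthetical: choosing $m$ from a mere lower bound on $a$ can overshoot the rotation past $\pi/2$ and collapse the success probability, so the unknown-$a$ case genuinely requires the exponential-search variant of the algorithm rather than a constant-factor adjustment --- but this does not affect the theorem as stated, which presumes $a$ is known.
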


A variant of quantum amplification amplitude can be used to count approximately, again with a quadratic speed-up over classical algorithms~\cite{DBLP:conf/icalp/BrassardHT98}. 

\begin{theorem}[Quantum counting]
\label{Qc}
Let $F : \{0, \ldots N-1\} \to \{0,1\}$ be a Boolean function, and $p = |F^{-1}(1)|/N$.
For every positive integer $D$, there is a quantum algorithm that makes $D$ queries to $F$ and, with probability at least $8/\pi^2$, outputs an estimate $p'$ to $p$ such that $|p-p'|\leq 2\pi\sqrt{p}/D+\pi^2/D^2$.
\end{theorem}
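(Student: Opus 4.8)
The plan is to derive this statement, following Brassard, H\o yer and Tapp~\cite{DBLP:conf/icalp/BrassardHT98}, from quantum phase estimation applied to a Grover-style iteration operator. First I would introduce the unitary $\mathcal A$ that maps $\ket 0$ to the uniform superposition $\ket\psi = \frac1{\sqrt N}\sum_x \ket x$, together with the sign-flip oracles $S_F$ (multiplying $\ket x$ by $-1$ exactly when $F(x)=1$) and $S_0$ (multiplying $\ket 0$ by $-1$), and form $Q = -\,\mathcal A\, S_0\, \mathcal A^{-1}\, S_F$. A short, standard computation shows that $Q$ preserves the two-dimensional plane spanned by the normalized uniform superposition $\ket{\psi_1}$ of the marked inputs and the normalized uniform superposition $\ket{\psi_0}$ of the unmarked inputs, that $\ket\psi$ lies in this plane, and that the restriction of $Q$ to it is a rotation by angle $2\theta$, where $\theta\in[0,\pi/2]$ is defined by $\sin^2\theta = p$. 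Hence $Q$ restricted to this plane has eigenvalues $e^{\pm 2i\theta}$, and $\ket\psi$ is an equal-weight superposition of the two corresponding eigenvectors.

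Next I would run phase estimation on $Q$ with input $\ket\psi$, sizing the control register so that the controlled ladder $Q^{1}, Q^{2},\dots$ performs $D$ applications of $Q$ in total. Measuring the control register returns an estimate of one of the two phases $\pm\theta/\pi$, from which one reads off a value $\tilde\theta$ with $\sin^2\tilde\theta$ well defined (using $\sin^2\theta=\sin^2(\pi-\theta)$, the two eigenphases give the same $\sin^2$), and the analysis of phase estimation guarantees $|\tilde\theta-\theta|\le\pi/D$ except with probability at most $1-8/\pi^2$; this constant, larger than the $4/\pi^2$ of the most basic phase-estimation guarantee, appears because each of the two eigenvector branches of $\ket\psi$ individually yields a correct estimate of $p$ and because it suffices that the measured grid point be within one step of the true value rather than exactly the nearest. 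I would then output $p' = \sin^2\tilde\theta$. The degenerate cases $p\in\{0,1\}$, where $\ket\psi$ is already an eigenvector of $Q$, are handled directly: there $p'$ is returned exactly and the bound is trivial.

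The remaining work, and the step I expect to be the only real (though routine) obstacle, is turning the additive error on the angle into the claimed additive error on $p$. Using $\sin^2\theta - \sin^2\tilde\theta = \sin(\theta-\tilde\theta)\sin(\theta+\tilde\theta)$, I would bound $|\sin(\theta-\tilde\theta)| \le |\theta-\tilde\theta| \le \pi/D$ and $|\sin(\theta+\tilde\theta)| \le |\sin\theta| + |\sin\tilde\theta| \le 2\sin\theta + |\theta-\tilde\theta| \le 2\sqrt p + \pi/D$, which gives $|p-p'| \le \frac{\pi}{D}\bigl(2\sqrt p + \frac{\pi}{D}\bigr) = \frac{2\pi\sqrt p}{D} + \frac{\pi^2}{D^2}$, exactly as stated; here one must keep track of the factor $2$ between $\theta$ and the estimated quantity $2\theta$, which is what produces the leading coefficient $2\pi$. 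Finally, for the query count, each application of $Q$ invokes $S_F$ once, i.e.\ makes a single query to $F$, so the whole procedure uses $D$ queries to $F$, while the auxiliary operations $\mathcal A$, $\mathcal A^{-1}$, $S_0$ and the phase-estimation quantum Fourier transform contribute only to the gate count.
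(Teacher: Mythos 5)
The paper does not prove this theorem; it is quoted as a known result with a citation to Brassard--H\o yer--Tapp, and your reconstruction is exactly the standard phase-estimation proof from that reference (estimate the Grover rotation angle $\theta$ with $\sin^2\theta=p$ to within $\pi/D$, then propagate the error via $\sin^2\theta-\sin^2\tilde\theta=\sin(\theta-\tilde\theta)\sin(\theta+\tilde\theta)$). The argument is correct; your error-propagation step in fact yields the slightly sharper bound $2\pi\sqrt{p(1-p)}/D+\pi^2/D^2$ of the original paper if one keeps $|\sin(\theta+\tilde\theta)|\le 2\sqrt{p(1-p)}+\pi/D$, but the weakened form you derive is precisely what the statement asserts.
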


\subsection{Quantum search of pairs}
\label{sec:pairs}

We also use Ambainis' quantum algorithm for the element distinctness problem.
In our work, we use it to search for collisions.
\begin{theorem}[Ambainis~\cite{DBLP:journals/siamcomp/Ambainis07}]
\label{thm:ambainis}
Given a list of numbers $x_1,\ldots,x_n$, there exists a quantum algorithm that finds, with high probability, a pair of indices $(i,j)$
such that $x_i=x_j$, if there exists one, at a cost $O(n^{2/3})$.
\end{theorem}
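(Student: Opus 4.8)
\section*{Proof proposal}

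The plan is to realize this as a quantum walk on the Johnson graph and analyze it with the standard quantum-walk search framework (see~\cite{santha2008quantum}), rather than as a plain Grover search. Fix a parameter $r$, to be optimized at the end. The vertices of the walk are the $r$-element subsets $S\subseteq\{1,\dots,n\}$; at each vertex we keep, in a data structure supporting fast membership and insertion/deletion, the pairs $(i,x_i)$ for $i\in S$. Two vertices are adjacent when the corresponding subsets differ by swapping one index in for one index out. A vertex $S$ is \emph{marked} if it contains a colliding pair, i.e.\ there are $i,j\in S$ with $i\neq j$ and $x_i=x_j$; with the stored data structure, deciding whether $S$ is marked needs no further queries and only polylogarithmic time.

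First I would pin down the cost parameters. The \emph{setup} cost is that of preparing the uniform superposition over all vertices together with their data, which requires reading $r$ values, so $\SSS=\tilde O(r)$. The \emph{update} cost --- passing from a superposition at one vertex to the superposition over its neighbours, which amounts to deleting one stored pair and inserting a freshly queried one --- is $\UU=\tilde O(1)$. The \emph{checking} cost is $\CC=\tilde O(1)$. Next I would record the two combinatorial quantities the framework needs: the spectral gap of the Johnson-graph walk, which is $\delta=\Omega(1/r)$ for $r\le n/2$, and a lower bound $\varepsilon$ on the fraction of marked vertices. In the worst relevant case there is a single colliding pair $\{a,b\}$; a uniformly random $r$-subset contains both $a$ and $b$ with probability $\binom{n-2}{r-2}/\binom{n}{r}=\Theta((r/n)^2)$, so $\varepsilon=\Omega((r/n)^2)$, and with more collisions $\varepsilon$ only grows.

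Plugging these into the quantum-walk search bound, whose total cost is of order $\SSS+\frac{1}{\sqrt{\varepsilon}}\bigl(\CC+\frac{1}{\sqrt{\delta}}\,\UU\bigr)$, gives a cost of order $r+\frac{n}{r}\bigl(1+\sqrt r\,\bigr)=\tilde O\!\left(r+\frac{n}{\sqrt r}\right)$. Balancing the two terms by choosing $r=n^{2/3}$ yields $\tilde O(n^{2/3})$; a more careful direct construction of the walk operator (Ambainis' original argument, which does not go through a generic reduction) removes the logarithmic factors and gives exactly $O(n^{2/3})$. The main obstacle is not this parameter count but the correctness of the walk itself: one must show the walk operator can be implemented coherently on the states ``$r$-subset together with its data structure'' --- in particular that the data structure admits a history-independent, canonical encoding preserved under updates, so that interference is not spoiled --- and that the marked and unmarked subspaces have the spectral structure making the amplitude concentrate on marked vertices after $O(1/\sqrt{\varepsilon\delta})$ steps. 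This is precisely the technical heart of~\cite{DBLP:journals/siamcomp/Ambainis07}, which I would invoke rather than reprove.
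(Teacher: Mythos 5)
Your proposal is correct: the paper does not prove this statement at all (it imports it directly from Ambainis' paper as a black box), and your sketch is precisely the standard argument behind that reference --- a quantum walk on the Johnson graph $J(n,r)$ with setup cost $\SSS=O(r)$, update cost $\UU=O(1)$, checking cost $\CC=O(1)$, spectral gap $\delta=\Omega(1/r)$ and marked fraction $\varepsilon=\Omega((r/n)^2)$, balanced at $r=n^{2/3}$. The parameter count, the identification of the history-independent data structure as the genuine technical difficulty, and the decision to defer that point to the cited source are all appropriate; nothing further is needed.
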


The quantum algorithm proposed by Ambainis can easily be adapted
to finding a pair satisfying $x_i + x_j = w$ for any given $w$ (when the $x_i$'s are group elements and 
the ``$+$'' operation can be computed efficiently).

Ambainis' algorithm can also be
adapted to search  in a list $\{x_1, \ldots,x_n\}$ for a pair 
of indices $(i,j)$ such that $(x_i, x_j)$ satisfies some relation~$R$, with the promise that the
input contains at least $k$ possible pairs satisfying $R$.
If the input of the problem is a uniformly random set of pairs, it is sufficient, in order to find one, to
run Ambainis' algorithm on a smaller random subset of inputs.

\begin{theorem}
\label{thm:newalgo}
Consider a list of numbers $x_1,\ldots,x_n$ with $x_i \in X$ and a set of pairs
$\DD \subset X\times X$ such that $\DD$ contains exactly $k$ pairs.
There exists a quantum algorithm that finds, with high probability, a pair $(i,j)$
such that $(x_i,x_j) \in \DD$, at a cost $O(n^{2/3}k^{-1/3})$ on average over uniformly distributed inputs.
\end{theorem}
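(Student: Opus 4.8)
The plan is to reduce Theorem~\ref{thm:newalgo} to Theorem~\ref{thm:ambainis} (or rather its relational variant mentioned just above) by a standard subsampling argument, and then optimize the size of the subsample. First I would fix a parameter $m \le n$ and consider the algorithm that selects a uniformly random subset $S \subseteq \{1,\ldots,n\}$ of size $m$, restricts attention to the sublist $(x_i)_{i \in S}$, and runs Ambainis' pair-finding algorithm (for the relation ``$(x_i,x_j) \in \DD$'') on this sublist at cost $O(m^{2/3})$. For this to succeed, the sublist must contain at least one pair from $\DD$; so the main task is to choose $m$ so that, on average over the input \emph{and} the choice of $S$, this happens with constant probability, and to argue that conditioned on a good pair being present Ambainis' algorithm finds one with high probability.

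Next I would estimate the probability that a random sublist of size $m$ captures one of the $k$ good pairs. Since the inputs are uniformly distributed, the $k$ witness pairs are spread over essentially distinct indices, so the expected number of good pairs fully contained in $S$ is of order $k \cdot (m/n)^2$. Hence taking $m = \Theta(n\,k^{-1/2})$ makes this expectation a constant, and a second-moment / Markov argument gives that with constant probability at least one good pair lies in $S$. (A minor wrinkle: two witness pairs could share an index, or the random input could occasionally place many witnesses in a small cluster; I would handle this by noting that over uniform inputs the witness indices behave like a random $2k$-subset, so such degeneracies contribute only lower-order corrections and can be absorbed by repeating the whole procedure $O(1)$ times.) Plugging $m = \Theta(n\,k^{-1/2})$ into the $O(m^{2/3})$ cost of Ambainis' algorithm yields $O\bigl((n\,k^{-1/2})^{2/3}\bigr) = O(n^{2/3}k^{-1/3})$, which is the claimed bound.

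Finally I would assemble the pieces: run the subsampled Ambainis procedure a constant number of times to boost the probability that some run sees a good pair, and within each run invoke the high-probability guarantee of Theorem~\ref{thm:ambainis} (amplified by $O(1)$ repetitions if needed) so that whenever a good pair is present it is actually returned; a union bound over this constant number of trials gives overall constant success probability, and the total cost is still $O(n^{2/3}k^{-1/3})$ since we only multiply by constants. The main obstacle I anticipate is the averaging step: the statement is ``on average over uniformly distributed inputs'', so one must be slightly careful that the expected cost — not merely the cost in a typical case — is $O(n^{2/3}k^{-1/3})$. Since Ambainis' algorithm has a fixed worst-case cost $O(m^{2/3})$ regardless of the input, the per-run cost is deterministic once $m$ is fixed, so the expectation is over the success event only; this makes the averaging essentially trivial once the combinatorial counting of captured pairs is done. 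The one place requiring genuine care is ruling out pathological inputs where $\DD$ happens to have all $k$ pairs concentrated on very few indices — but the uniform-input hypothesis is exactly what forbids this, so I would make that dependence explicit in the proof.
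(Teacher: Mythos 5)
Your proposal is correct and follows essentially the same route as the paper: sample a random subset of size $\Theta(n k^{-1/2})$ so that it contains a pair from $\DD$ with constant probability, then run Ambainis' algorithm on the subsample at cost $O\bigl((n k^{-1/2})^{2/3}\bigr) = O(n^{2/3}k^{-1/3})$. The paper's own proof is considerably terser — it simply asserts the constant-probability capture of a good pair — so your second-moment accounting and the discussion of degenerate witness configurations are elaborations of, not departures from, the published argument.
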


\begin{proof}
For a uniformly chosen subset $X' \subset X$ such that $|X'| = n/\sqrt{k}$, there is,
with constant probability, at least one
pair from $\DD$ in $X' \times X'$. According to Theorem~\ref{thm:ambainis}, 
the cost of finding this pair is $O(n^{2/3}k^{-1/3})$. 
Therefore, the quantum algorithm starts by sampling a random $X'$ and then runs Ambainis' algorithm on
this subset.

Notice that if the algorithm runs on uniformly random inputs, the set $X'$ does not need to be itself chosen at random. Any sufficiently large subset will contain one of the pairs with high probability, with high probability over the distribution of inputs.
\end{proof}

Before ending this section on quantum algorithms, we make a remark on 
the outputs produced by quantum-walk-based algorithms, such as
Ambainis' or Grover's algorithm.
In our applications, we use these not necessarily to produce some output, but to prepare a superposition of the outputs.
Similarly to Grover's algorithm, this can be done by running the algorithm without performing the
final measurement. 
However, since Ambainis' algorithm uses a quantum memory to maintain some
data structure, the superposition could in principle
include the data from the memory.
This issue does not happen with Grover's algorithm precisely because it does not require any data structure.

In our case, the algorithm ends in a superposition of nodes containing at least one of the searched pairs.
It has no consequence for our application, because we are nesting this procedure in Grover's algorithm.
Alternatively, it is possible to use amplitude amplification afterwards in order to amplify the
amplitude on the good nodes.
However, this could be an issue when nesting our algorithm in an arbitrary quantum algorithm.
For a discussion on nested quantum walks, see~\cite{nestedQW}.

\section{Differential Cryptanalysis}
\label{sec:differential}
Differential  cryptanalysis was introduced in~\cite{DBLP:conf/crypto/BihamS90} by Biham and Shamir.
It studies the propagation of differences in the input of a function ($\delta_{\mathrm{in}}$) and their influence on the generated output difference ($\delta_{\mathrm{out}}$).
In this section, we present the two main types of differential attacks on block ciphers in the classical world: the \emph{differential distinguisher} and the {\em last-rounds attack}, and then analyze their complexities for quantum adversaries. 

\begin{table}
  \centering
  \begin{tabular}{ll}
    \toprule
    $n$ & block-size\\
    $k$ & key-size\\
    $\Delta_{\mathin}$ & size ($\log$) of the set of input differences\\
    $\Delta_{\mathout}$ & size ($\log$) of the set of output differences\\
    $\Delta_{\mathfin}$ & size ($\log$) of the set of differences $\DD_\mathfin$ after last rounds\\
    $h_S$ & probability ($-\log$) of the differential characteristic ($h_S < n$)\\
    $h_T$ & probability ($-\log$) of the truncated differential characteristic\\
    $h_{\mathout}$ & probability ($-\log$) of generating $\delta_{\mathrm{out}}$ from $\DD_\mathfin$\\
    $k_{\mathout}$ & number of key bits required to invert the last rounds \\
    $C_{k_{\mathout}}$ & cost of recovering the last round subkey from a good pair\\
    $C_{k_{\mathout}}^{*}$ & quantum cost of recovering the last round subkey from a good pair\\
    $\varepsilon$ & bias of the linear approximation \\
    $\ell$ & number of linear approximations (Matsui's algorithm 1) \\
    \bottomrule
  \end{tabular}
  \caption{Notations used in the attacks.}
  \label{tab:notations}
\end{table}

\subsection{Classical Adversary}
 Differential attacks exploit the fact that there exists an input difference $\delta_{\mathrm{in}}$ and an output difference $\delta_{\mathrm{out}}$ to a cipher $E$ such that 
\begin{equation}\label{eq1}
h_S := - \log \Pr_x [E(x\oplus\delta_{\mathrm{in}}) = E(x) \oplus \delta_{\mathrm{out}}] < n,
\end{equation}
\emph{i.e.}, such that we can detect some non-random behaviour of the differences of plaintexts $x$ and $x \oplus\delta_{\mathrm{in}}$. Here, ``$\oplus$'' represents the bitwise xor of bit strings of equal length.
The value of $h_S$ is generally computed for a random key, and as usual in the literature, we will assume that 
Eq.~\eqref{eq1} approximately holds for the secret key $\kappa^*$.
Such a relation between $\delta_\mathin$ and $\delta_\mathout$ is typically found by studying the internal structure of the primitive in detail. While it seems plausible that a quantum computer could also be useful to find good pairs $(\delta_\mathin, \delta_\mathout)$, we will not investigate this problem here, but rather focus on attacks that can be mounted once a good pair satisfying Eq.~\eqref{eq1} is given.

\subsubsection{Differential Distinguisher} This non-random behaviour can already be used to attack a cryptosystem by distinguishing it from a random function. 
This distinguisher is based on the fact that, for a random function and a fixed $\delta_{\mathrm{in}}$, obtaining the $\delta_{\mathrm{out}}$ difference in the output would require $2^n$ trials, where $n$ is the size of the block. On the other hand, for the cipher $E$, if we collect $2^{h_S}$ input pairs verifying the input difference $\delta_{\mathrm{in}},$ we can expect to obtain one pair of outputs with output difference $\delta_{\mathrm{out}}$.
The complexity of such a distinguisher exploiting Eq.~\eqref{eq1} is 
$2^{h_S+1}$ in both data and time, and is negligible in terms of memory:
\begin{align}
T_{\mathrm{C}}^{\mathrm{s.\, dist.}} = D_{\mathrm{C}}^{\mathrm{s.\, dist.}} =  2^{h_S+1}.
\end{align}
Here, the subscript $\mathrm{C}$ refers to classical and $s.~dist.$ to ``simple distinguisher'' by opposition to its truncated version later in the text. 

Assuming that such a distinguisher exists for the first $R$ rounds of a cipher, we can transform the attack into a key recovery on more rounds by adding some rounds at the end or beginning of the cipher. This is called a {\em last-rounds attack}, and allows to attack more rounds than the distinguisher, typically one or two, depending on the cipher.

\subsubsection{Last-Rounds Attack} For simplicity and
without loss of generality, we consider that the rounds added to the
distinguisher are placed at the end. We attack a total of $r=R+r_{\mathrm{out}}$ rounds, where $R$ are the rounds covered by the distinguisher.
The main goal of the attack is to reduce the key space that needs to be searched exhaustively from $2^k$ to some $2^{k'}$ with $k' <k$. 
For this, we use the fact that we have an advantage for finding an input $x$ such that $E^{(R)}(x) \oplus E^{(R)}(x \oplus\delta_{\mathrm{in}}) = \delta_{\mathrm{out}}$. 

For a pair that generates the difference $\delta_{\mathrm{out}}$ after $R$ rounds, 
we denote by $\DD_\mathfin$ the set of possible differences generated in the output after the final $r_{\mathrm{out}}$ rounds,  the size of this set by $2^{\Delta_\mathfin} = |\DD_{\mathfin}|$. Let $2^{-h_{\mathrm{out}}}$ denote the probability of generating the difference $\delta_{\mathrm{out}}$ from a difference in $\DD_\mathfin$ when computing $r_{\mathrm{out}}$ rounds in the backward direction, and by $k_{\mathrm{out}}$ the number of key bits involved in these rounds.
The goal of the attack is to construct a list $L$ of candidates for the partial key that contains almost surely the correct value, and that has size strictly less than $2^{k_{\mathrm{out}}}$. 
For this, one starts with lists $L_M$ and $L_K$ where $L_M$ is a random subset of $2^{h_S}$ possible messages and $L_K$ contains all possible $k_{\mathrm{out}}$-bit strings. 
From Eq.~\eqref{eq1}, the list $L_M$ contains an element $x$ such that $E^{(R)}(x) \oplus E^{(R)}(x \oplus\delta_{\mathrm{in}}) = \delta_{\mathrm{out}}$ with high probability.
Let us apply two successive tests to the lists.

The first test keeps only the $x \in L_M$ such that $E(x) \oplus E(x \oplus\delta_{\mathrm{in}}) \in \DD_{\mathfin}$. The probability of satisfying this equation is $2^{\Delta_{\mathfin}-n}$. This gives a new list $L_M'$ of size $|L_M'| = 2^{h_S + \Delta_{\mathfin} -n}$. The cost of this first test is $2^{h_S+1}$.

The second test considers the set $L_M' \times L_K$ and keeps only the couples $(x,\kappa)$
such that $E^{(R)}_\kappa(x)+E^{(R)}_\kappa(x+\delta_{in})=\delta_{\mathrm{out}}$. This
is done by computing backward the possible partial keys for a given difference in $\DD_{\mathrm{out}}$.
Denote $C_{k_{\mathrm{out}}}$ the average cost of generating those keys for a given input pair.
Notice that $C_{k_{\mathrm{out}}}$ can be $1$ when the number of rounds added is reasonably small\footnote{For example, using precomputation tables with the values that allow the differential transitions through the S-Boxes.}, and is upper bounded by $2^{k_{\mathrm{out}}}$, that is, $1\leq C_{k_{\mathrm{out}}} \leq 2^{k_{\mathrm{out}}}$.
For a random pair $(x, \kappa)$, 
the probability of passing this test is $2^{-h_{\mathrm{out}}}$.
The size of the resulting set is therefore expected to be $2^{-h_{\mathrm{out}}} \times |L_M'| \times |L_K| = 2^{h_S + \Delta_{\mathfin} - n + k_{\mathrm{out}} - h_{\mathrm{out}}}$. 
The cost of this step is $C_{k_{\mathrm{out}}} 2^{h_S + \Delta_{\mathfin} - n}$.

The previous step produces a list of candidates for the partial key corresponding to the key bits
involved in the last $r_{\mathrm{out}}$ rounds and leading to a difference $\delta_{\mathrm{out}}$ after $R$ rounds. 
The last step of the attack consists in performing an exhaustive search 
within all partial keys of this set completed with all possible $k-k_{\mathrm{out}}$ bits.
The cost of this step is $2^{h_S + \Delta_{\mathfin} - n + k - h_{\mathrm{out}}}$.

In practice, the lists do not need to be built and everything can be performed ``on the fly''. Consequently, memory needs can be made negligible.
The total time complexity is:
\begin{align}
T_{\mathrm{C}}^{\mathrm{s.\, att.}} = 2^{h_S+1} +
			2^{h_S + \Delta_{\mathfin} - n} \left( C_{k_{\mathrm{out}}} +  2^{ k - h_{\mathrm{out}}} \right ),
\end{align} 
while the data complexity of this classical attack is $D_{\mathrm{C}}^{\mathrm{s.\, att.}} =  2^{h_S+1}$.
The attack is more efficient than an exhaustive search if $T_{\mathrm{C}}^{\mathrm{s.\, att.}} <2^k$.
 
\subsection{Quantum Adversary}

We first give attacks in the Q2 model, using superposition queries.

\subsubsection{Differential Distinguisher in the Q2 model}
The distinguisher consists in applying a Grover search over the set of messages $X$, of size $2^n$. 
More precisely, the algorithm makes $2^{h_S/2+1}$ queries to the encryption cipher, trying to find a marked
element $x \in M
= \left\{ x \in X \: : \: E(x \oplus \delta_{\mathrm{in}}) = E (x) \oplus \delta_{\mathrm{out}} \right\}.$
If it finds any, it outputs ``concrete''. If it does not, it outputs ``random''.

With the notations of the previous sections, the fraction of marked elements is $\varepsilon = 2^{-h_S}$ and Grover's algorithm finds a marked element after $\frac{1}{\sqrt{\varepsilon}} = 2^{h_S/2}$ iterations, each one requiring two queries to the encryption cipher. The time and data complexities are:
\begin{align}
T_{\Q2}^{\mathrm{s.\, dist.}} = D_{\Q2}^{\mathrm{s.\, dist.}} =  2^{h_S/2+1}.
\end{align}

It remains to prove that in the case of a random function, the probability of finding a marked element is negligible.
Assume that the probability of finding a marked element after $2^{h_S/2}$ quantum queries
is $\delta$. Then, this can be wrapped into an amplitude amplification procedure (Theorem~\ref{algo:AA}), leading
to a bounded error algorithm making $(1/\sqrt{\delta}) 2^{h_S/2}$ queries.
Since Grover's algorithm is optimal \cite{PhysRevA.60.2746}, we get that $(1/\sqrt{\delta}) 2^{h_S/2} \geq 2^{n/2}$, leading to $\delta\leq 2^{h_S-n}$.

\subsubsection{Last-Rounds Attack in the Q2 model}
\label{sec:lr-FQ}

An important point of the attack in the Q2 model is that it should avoid creating lists.
Instead, the algorithm queries the cryptographic algorithm whenever it needs
to sample an element from the list. 

The quantum attack can be described as a Grover search, with quantum procedures for the setup and checking phases.
The algorithm searches in the set $X= \{x: E(x\oplus \delta_{\mathrm{in}} ) \oplus E(x) \in \DD_{\mathfin}\}$ 
for a message such that $E^{(R)}(x\oplus \delta_{\mathrm{in}} ) \oplus E^{(R)}(x) = \delta_\mathout$.
This procedure outputs a message and when it is found, it suffices to execute the sequence corresponding to the checking of the Grover
search once more: generate partial key candidates and search among them, completed with all possible remaining $k-k_\mathout$ bits. This outputs the correct key and only adds a constant overhead factor to Grover search.
Notice that using tailor-made quantum walks, it should be possible to suppress this overhead.
Here we use Grover search to keep the attacks as simple as possible.

The setup phase prepares a uniform superposition of the $x\in X$; this
costs $\SSS=2^{(n-\Delta_{\mathfin})/ 2}$ using Grover's algorithm.
The checking phase takes a value $x$ and must determine whether $(x, x
\oplus \delta_{\mathrm{in}})$ is a good pair; it consists of the following successive steps:
\begin{enumerate}[noitemsep]
\item Compute all possible partial keys $\kappa_{\mathrm{out}}$ for the
  $k_{\mathrm{out}}$ bits that intervene in the last
  $r_\mathout$~rounds, assuming that $(x, x \oplus
  \delta_{\mathrm{in}})$ is a good pair ($E^{(R)}(x\oplus \delta_{\mathrm{in}} ) \oplus E^{(R)}(x) = \delta_\mathout$); 
  \item Complete the key by searching exhaustively using a Grover search, checking if the obtained key is the correct one.
\end{enumerate}
The cost of computing all possible partial keys is $C^*_{k_{\mathrm{out}}}$. The number of partial keys is
$2^{k_{\mathrm{out}}-h_{\mathrm{out}}}$, then completed by ${k-k_{\mathrm{out}}}$ remaining bits.
The cost of checking through all of them is thus $\CC=C^*_{k_{\mathrm{out}}}+2^{(k-h_{\mathrm{out}})/2}$.

The procedure succeeds whenever a
message $x$ is found such that $E^{(R)}(x) \oplus E^{(R)}(x \oplus\delta_{\mathrm{in}})=\delta_{\mathrm{out}}$. Therefore,
the probability of finding a marked element is lower bounded by
$\varepsilon \geq 2^{-h_S - \Delta_{\mathfin} + n}$. This is the conditional probability
of getting $E^{(R)}(x\oplus \delta_{\mathrm{in}} ) \oplus E^{(R)}(x) = \delta_\mathout$
given that the output
difference is in~$\DD_\mathfin$.

The total cost of the attack in the Q2 model is:
\begin{align}
T_{\Q2}^{\mathrm{s.\, att.}} =  2^{h_S/2+1}  +2^{(h_S + \Delta_\mathfin - n)/2} \left ( C^*_{k_{out}} + 2^{(k - h_\mathout)/2} \right), \quad D_{\Q2}^{\mathrm{s.\, att.}} =  2^{h_S/2+1},
\end{align}
with a data complexity identical to that of the distinguisher.

\subsubsection{Last-Rounds Attack in the Q1 model}
\label{sec:diff-lr-Q1}

We can also have a speed-up for the last-round attack in the Q1 model.
In this model, the quantum operations only take place after a classical
acquisition of the data.  In particular, the data complexity will be the
same as for a classical adversary.
After the first filtering step of the classical last-round attack, $2^{h_S - n +\Delta_{\mathfin}}$ couples satisfying $E(x) \oplus E(x\oplus \delta_\mathin) \in  \DD_\mathfin$ 
are obtained.  The attacker then uses a quantum algorithm to generate the partial keys $\kappa_\mathout$,
and a Grover search among those, completed  with all possible remaining $k - k_\mathout$ bits of the key,
in order to find the key.
This leads to data and time complexities of:
\begin{align}
\label{eq:timesemiq}
D_{\Q1}^{\mathrm{s.\, att.}} =  2^{h_S+1}, \quad T_{\Q1}^{\mathrm{s.\, att.}} =  2^{h_S+1}  +2^{(h_S + \Delta_\mathfin - n)/2} \left ( C^*_{k_{out}} + 2^{(k - h_\mathout)/2} \right),
\end{align}
where $C_{k_\mathout}^*$ denotes the average time complexity of
generating the partial keys of length $k_\mathout$, on a \emph{quantum}
computer.

Let us point out that any classical attack with data complexity smaller than square root of the exhaustive search of the key can be translated into an effective attack also in the Q1 model. This is more likely to happen for larger keys, where the limiting terms are often the second and third terms of $T_{\Q1}^{\mathrm{s.\, att.}}$ in Eq.~\eqref{eq:timesemiq}. See a detailed example in~\ref{klein}. The fact that long keys are more likely to ``maintain'' the validity of the attacks is an interesting result, as longer keys correspond to the recommendations for post-quantum symmetric primitives. In these cases, the Q1 model is particularly meaningful.

\subsubsection{Generating partial keys on a quantum computer}
\label{sect:Ckoutq}
We investigate further the average cost $C_{k_\mathout}$ of generating the partial keys compatible with some input~$x$ such that $E(x) \oplus E(x \oplus \delta_\mathin) \in \DD_\mathfin$. These partial keys correspond to the key bits involved in a transition from $\DD_\mathfin$ to $\delta_\mathout$.

Using a classical computer and a precomputation table, this usually takes constant time, but can be up to  $2^{k_\mathout}$ in the worst case. It turns out that the worst case this can be sped up using a quantum computer. 

First, when $k_\mathout < h_\mathout$, the search is expected to return
zero or one key candidates.  In this case, we run a Grover search over
the key, with complexity $C_{k_\mathout}^*=2^{k_{\mathrm{out}}/2}$.

Otherwise, let $K=2^{k_\mathout - h_\mathout}$ be the average number of
partial key candidates compatible with some input $x$, and denote
$N=2^{k_\mathout}$.  Finding one partial key can be done using Grover
search with $(N / K)^{1/2}$ steps. To find the second one, we modify the
checking procedure to exclude the first result, and
Grover's algorithm uses only $(N/(K-1))^{1/2}$ iterations.  In general,
to find the $i$-th candidate, we need $(N/(K-i))^{1/2}$ iterations, but
each iteration takes time proportional to $\sqrt{i-1}$, corresponding to the time needed to check that the candidate does not belong to the list $(i-1)$ excluded ones. This leads to the following upper bound on
$C_{k_\mathout}^*$, the quantum version of $C_{k_\mathout}$:
\begin{eqnarray*}
C_{k_\mathout}^* \leq  \sum_{i=0}^{K-1} \sqrt{\frac{N i}{K-i}} \approx K\sqrt{N} \int_0^1 \sqrt{\frac{x}{1-x}} d x = \frac{\pi}{2} K\sqrt{N}
\end{eqnarray*}
Neglecting the constant as usual, and replacing with our parameters, this gives $C_{k_\mathout}^* \lesssim 2^{3k_\mathout/2 - h_\mathout}$.

\section{Truncated Differential Cryptanalysis}
\label{sec:truncated}
Truncated differential cryptanalysis was introduced by Knudsen~\cite{DBLP:conf/fse/Knudsen94} in 94. Instead of fixed input and output differences, it considers sets of differences (like  the differences in the output in the last-rounds attack that we have considered in the previous section).

We assume in the following that we are given two sets
$\DD_{\mathin}$ and $\DD_\mathout$ of input and output differences such
that the probability of generating a difference in
$\mathcal{D}_{\mathrm{out}}$ from one in $\mathcal{D}_{\mathrm{in}}$ is
$2^{-h_T}$.  We further consider that $\DD_{\mathin}$ and
$\DD_{\mathout}$ are vector spaces.

\subsection{Classical Adversary}
As in the simple differential case, we first present the differential distinguisher based on the non-random property of the differences behaviour, and then discuss the last-rounds attack obtained from the truncated differential distinguishers.

\subsubsection{Truncated Differential Distinguisher}
Let $2^{\Delta_{\mathrm{in}}}$ and $2^{\Delta_{\mathrm{out}}}$ denote the sizes of the input and output sets of differences, respectively.
For simplicity and without loss of generality, we assume to have access to an encryption oracle, and therefore only consider the truncated differential as directed from input to output\footnote{In the case where the other direction provides better complexities, we could instead perform queries to a decryption oracle and change the roles of input and output in the attack. We assume that the most interesting direction has been chosen.}. 
We denote by $2^{-h_T}$ the probability of generating a difference in $\mathcal{D}_{\mathrm{out}}$ from one in $\mathcal{D}_{\mathrm{in}}.$
The condition for the distinguisher to work is that
$2^{-h_T}>2^{\Delta_{\mathrm{out}}-n}$. 
In this analysis, we assume that $2^{-h_T} \gg 2^{\Delta_{\mathrm{out}}-n}$.

The advantage of truncated differentials is that they allow the use of structures, \emph{i.e.}, sets of plaintext
values that can be combined into input pairs with a difference in $\mathcal{D}_{\mathrm{in}}$ in many different ways: one can generate $2^{2\Delta_{\mathrm{in}}-1}$ pairs using a single structure of size $2^{\Delta_{\mathrm{in}}}$. This reduces the data complexity compared to simple differential attacks. 

Two cases need to be considered.
If $\Delta_\mathin \geq (h_T+1)/2$, we build a single structure $\mathcal S$
of size $2^{(h_T+1)/2}$
such that for all pairs
$(x,y) \in \mathcal S \times \mathcal S$, $x\oplus y \in \DD_\mathin$.
This structure generates $2^{h_T}$ pairs.
If $\Delta_\mathin \leq (h_T+1)/2$, we have to consider multiple
  structures $\mathcal S_i$.  Each structure contains
  $2^{\Delta_\mathin}$ elements, and generates $2^{2\Delta_\mathin -1}$
  pairs of elements.  We consider $2^{h_T-2\Delta_\mathin +1}$ such
  structures in order to have $2^{h_T}$ candidate pairs.

In both cases, we have $2^{h_T}$ candidate pairs.  With high
probability, one of these pairs shall satisfy $E(x) \oplus E(y) \in
\DD_{\mathrm{out}}$, something that should not occur for a random function if $2^{-h_T} \gg
2^{\Delta_{\mathrm{out}}-n}$.  Therefore detecting a single valid pair gives an
efficient distinguisher.

The attack then works by checking if, for a pair generated by the data,
the output difference belongs to $\DD_\mathout$.  Since $\DD_\mathout$
is assumed to be a vector space, this can be reduced to trying to find a
collision on $n-\Delta_{\mathrm{out}}$ bits of the output. 
Once the data is generated, looking for a collision is not expensive
(\emph{e.g.} using a hash table), which means that time and data complexities coincide:
\begin{equation}\label{eq4}
D_{\mathrm{C}}^{\mathrm{tr.\, dist.}}=\max\{2^{(h_T+1)/2},2^{h_T-\Delta_{\mathrm{in}}+1}\}, \quad T_{\mathrm{C}}^{\mathrm{tr.\, dist.}}=\max\{2^{(h_T+1)/2},2^{h_T-\Delta_{\mathrm{in}}+1}\}.
\end{equation}

\subsubsection{Last-Rounds Attack}
\label{LR_T_C}
\label{sec:trunc-lr-C}

Last-rounds attacks work similarly as in the case of simple differential cryptanalysis. For simplicity, we assume that $r_{\mathrm{out}}$ rounds are added at the end of the truncated differential. 
The intermediate set of differences is denoted $\DD_\mathout$, and its size is $2^{\Delta_\mathout}$.
The set $\DD_\mathfin$, of size $2^{\Delta_\mathfin}$ denotes the possible differences for the outputs after the final round.
The probability of reaching a difference in $\DD_\mathout$ from a difference in $\DD_\mathin$ is $2^{-h_T}$, and
the probability of reaching a difference in $\DD_{\mathout}$ from a difference in $\DD_{\mathfin}$ is $2^{-h_{\mathrm{out}}}$.
Applying the same algorithm as in the simple differential case, the data complexity remains the same as for the distinguisher: 
\begin{align}
D_{\mathrm{C}}^{\mathrm{tr.\, att.}}=\max\{2^{(h_T+1)/2},2^{h_T-\Delta_{\mathrm{in}}+1}\}.
\end{align}
The time complexity in this case is:
\begin{align}\label{eq5}
T_{\mathrm{C}}^{\mathrm{tr.\, att.}}=\max\{2^{(h_T+1)/2},2^{h_T-\Delta_{\mathrm{in}}+1}\} +2^{h_T+\Delta_{\mathfin}-n} \left( C_{k_{\mathrm{out}}}+2^{k-h_{\mathrm{out}}} \right),
\end{align}
where $C_{k_{\mathrm{out}}}$ is the average cost of finding all the
partial key candidates corresponding to a pair of data with a difference
in $\DD_{\mathout}$. As mentioned earlier, $C_{k_{\mathrm{out}}}$ ranges from 1 
to $2^{k_{\mathrm{out}}}.$

\subsection{Quantum Adversary}

The truncated differential cryptanalysis is similar to the simple differential cryptanalysis, except that $\DD_{\mathrm{in}}$ and $\DD_{\mathrm{out}}$ are now sets instead of two fixed bit strings. 

\subsubsection{Truncated Differential Distinguisher}
Similarly to simple differential cryptanalysis, the distinguisher can only be more efficient in the Q2 model.
This comes from the fact that in both cases, the data complexity is the bottleneck. Since the Q1 model
does not provide any advantage over the classical one in data collection, there is no advantage in this model.

We use Ambainis' algorithm for element distinctness, given in Theorem~\ref{thm:ambainis},
in order to search for collisions inside the structures.
If a single structure is involved, 
the algorithm searches for a pair of  messages
$(x,y)$ in a set of size $2^{(h_T+1)/2}$,
such that $E(x) \oplus E(y) \in \DD_\mathout$.
Since there is, on average, only one such pair, this can be done using a quantum algorithm with 
$2^{(h_T+1)/3}$ queries.

If multiple structures are required, the strategy is to search
for one structure that contains a pair
$(x,y)$ such that $E(x) \oplus E(y) \in \DD_\mathout$.
This is done with a Grover search on the structure, 
using Ambainis' algorithm for the checking phase.
This returns a structure containing a desired pair, which is sufficient for the distinguisher.
The setup cost is constant.
The checking step, consisting in searching for a specific pair inside a structure of size $2^{\Delta_\mathin}$,
can be done with $\CC = 2^{2\Delta_{\mathrm{in}}/3}$ queries. Finally, since there is, with high probability,
at least one structure in $2^{h_T-2\Delta_{\mathrm{in}}+1}$ containing a pair such that  $E(x) \oplus E(y) \in \DD_\mathout$, we get a lower
bound on the success probability $\varepsilon \geq
2^{2\Delta_{\mathrm{in}}-h_T-1}$.
Using Theorem~\ref{thm:grover}, the total queries complexity is at most $2^{(h_T+1)/2 - \Delta_{\mathrm{in}} /3}$.

Combining both results leads to overall data and time complexities given by:
\begin{align}
D_{\Q2}^{\mathrm{tr.\, dist.}} = T_{\Q2}^{\mathrm{tr.\, dist.}} = \max \left\{ 2^{(h_T+1)/3}, 2^{(h_T+1)/2 - \Delta_{\mathrm{in}} /3} \right\}.
\end{align}

Similarly to the the quantum simple differential distinguisher, applying the same algorithm to a random function,
and stopping it after the same number of queries only provides a correct answer with negligible probability. 

\subsubsection{Last-Rounds Attack in the Q1 model}

As seen in Section~\ref{sec:trunc-lr-C}, last-round attacks for
truncated differential cryptanalysis are very similar to attacks with a
simple differential.  The attack in the Q1 model will differ from the
attack of Section~\ref{sec:diff-lr-Q1} only in the first
step, when querying the encryption function with the help of structures.
We start by generating a list of $2^{h_T}$ pairs with differences in $\mathcal D_\mathin$, which is done with data complexity:
\begin{align}
D_{\Q1}^{\mathrm{tr.\, att.}}  = \max\{2^{(h_T+1)/2},2^{h_T - \Delta_\mathin +1}\}.
\end{align}
The second step is to filter the list of elements to keep only the pairs $(x,y)$ such that $E(x) \oplus E(y) \in \DD_\mathfin$.
Notice that such a filtering can be done at no cost. It suffices to sort the elements  according to the values of their image, while constructing the list.

Finally, similarly to the Q1 simple differential attack, a quantum search algorithm is run on the filtered pairs, and the checking procedure consists in generating the partial key candidates
completed with $k-k_\mathrm{out}$ bits, and searching exhaustively for the key used in the cryptographic
oracle. 
In the Q1 model, the quantum speed-up only occurs in this step.

The average cost of generating  the partial keys on a quantum computer is denoted by $C^*_{k_\mathout}$. The average number of partial keys for a given pair of input is $2^{k_\mathout - h_\mathout}$. 
The fraction $\varepsilon$ of marked elements is $\varepsilon =2^{-h_T - \Delta_\mathfin + n}$, the setup cost is $S=1$ and the checking cost, a Grover search over the key space, is $\CC = C^*_{k_{out}} + 2^{(k - h_\mathout)/2}$. 
This gives a total cost:
\begin{align}
T_{\Q1}^{\mathrm{tr.\, att.}}  = \max \left\{2^{(h_T+1)/2},  2^{h_T - \Delta_\mathin +1}\right\} +  
2^{(h_T + \Delta_\mathfin - n)/2} \left ( C^*_{k_{out}} + 2^{(k - h_\mathout)/2} \right).
\end{align}

\subsubsection{Last-Rounds Attack in the Q2 model}
In the Q2 model, we want to avoid building classical lists. Instead, we query the cryptographic oracle
each time we need to sample a specific element.
This is challenging in the case of truncated differential because the use of structures made of lists is crucial.
The idea is to query the elements of the list on the fly.

\paragraph{Case where $h_T \leq 2\Delta_\mathin-1$.}
It is possible
to get $2^{h_T}$ pairs with differences in $\DD_\mathin$ with a single structure, $\mathcal S$,
of size $2^{(h_T+1)/2}$.
The attack runs a Grover search over 
$X=\{(x,y) \in \mathcal S \times \mathcal S : E(x)\oplus E(y) \in \mathcal \DD_\mathfin\}$.
The checking procedure is the same as for the quantum simple differential attack. For a given a pair of inputs, it generates all possible partial keys, and completes them to try to get the key used by cryptographic oracle.
This procedure returns a pair $(x,y)$.
The final step is to execute
the checking procedure in Grover search once more, suitably modified
to return the key given the pair $(x,y)$.

We analyze the setup cost of the attack. To prepare a superposition of the pairs in $X$,
we use the quantum search algorithm given in Theorem~\ref{thm:newalgo}.
This algorithm searches in a list for a pair of elements with a certain property,
considering there exist $k$ such pairs.
In our case, the list of elements is $\mathcal S$ of size $2^{(h_T+1)/2}$.
The total number of elements such that $E(x) \oplus E(y) \in \DD_\mathfin$ is therefore $2^{h_T-n+\Delta_\mathfin}$.
The algorithm of Theorem~\ref{thm:newalgo} prepares a superposition
of elements in $X$ in time $\SSS= 2^{(h_T+1)/3-(h_T-n+\Delta_\mathfin)/3} = 2^{(n-\Delta_\mathfin +1)/3}$.
The cost of the checking procedure is $\CC=C^*_{k_\mathout} + 2^{(k-h_\mathout)/2}$, as before.
The procedure is successful whenever a pair $(x,y)$ such that $E^{(R)}(x) \oplus E^{(R)}(y) \in \DD_\mathout$
is found. 
Given that the search is among pairs satisfying $x\oplus y \in \DD_\mathin$ and $E(x) \oplus E(y) \in \DD_\mathfin$, 
the probability for a pair to be good is $\varepsilon = 2^{-h_T -\Delta_\mathfin +n}$.
This gives a total running time:
\[
T_{\Q2}^{\mathrm{tr.\, att.}}   = 2^{h_T/2 - (n-\Delta_\mathfin)/6} + 2^{(h_T +\Delta_\mathfin -n)/2}\left( C^*_{k_\mathout} + 2^{(k-h_\mathout)/2}\right).
\]

\paragraph{Case where $h_T > 2\Delta_\mathin-1$.}
Multiple structures $\mathcal S_i$ of size
$2^{\Delta_{\mathin}}$ are now required, where~$i$ goes from 1 to $2^{h_T -
  2 \Delta_\mathin +1}$.  We use a Grover search over the structures, to
locate the structure containing a good pair (as previously, we then
repeat the attack with the right structure, but the complexity of this step is
negligible).  In order to test whether a structure contains a good pair,
the checking procedure uses the algorithm given in the previous section.
When using the
full structure with $2^{\Delta_{\mathrm{in}}}$ values, it has complexity:
\[
C = 2^{\Delta_{\mathrm{in}} - (n+1-\Delta_\mathfin)/6} + 2^{\Delta_{\mathrm{in}}-(n+1-\Delta_\mathfin)/2}\left( C^*_{k_\mathout} + 2^{(k-h_\mathout)/2}\right).
\]
We also have $S = 1$ and $\varepsilon = 2^{2\Delta_{\mathrm{in}}-1-h_T}$, which gives the following
complexity:
\begin{align*}
T_{\Q2}^{\mathrm{tr.\, att.}}  &= 2^{(h_T+1)/2 - (n+1-\Delta_{\mathrm{fin}})/6} + 2^{(h_T +\Delta_{\mathrm{fin}} -n)/2}\left( C^*_{k_\mathout} + 2^{(k-h_\mathout)/2}\right),
\end{align*}
which coincides with the complexity of the previous paragraph (when neglecting constants).

Note that for some parameters ($2 \Delta_{\mathrm{in}} - 1 - n +
\Delta_{\mathrm{fin}} < 0$) the expected size of $X$ is less than 1.  In
this case, the above procedure is not valid (it would run less than one
iteration of a Grover search).  Instead, the checking procedure should
run Ambainis' algorithm to find a pair with output in
$\mathcal{D}_{\mathrm{out}}$ (with complexity $2^{2\Delta_{\mathrm{in}}/3}$, because the parameters are chosen
assuming there is a solution), and use the usual key search algorithm
with complexity $C^*_{k_\mathout} + 2^{(k-h_\mathout)/2}$.  In this case
the complexity becomes:
\begin{align*}
T_{\Q2}^{\mathrm{tr.\, att.}}  &= 2^{(h_T+1)/2-\Delta_{\mathrm{in}}/3} +
2^{(h_T+1)/2-\Delta_{\mathrm{in}}}\left( C^*_{k_\mathout} + 2^{(k-h_\mathout)/2}\right).
\end{align*}
Note that $2^{(h_T+1)/2-\Delta_{\mathrm{in}}/3}$ is the complexity of
the truncated differential distinguisher in the \Q2 model.  Moreover,
with those parameters, 
$2^{(h_T+1)/2-\Delta_{\mathrm{in}}/3} > 2^{(h_T+1)/2 - (n+1-\Delta_{\mathrm{fin}})/6}$
and 
$2^{(h_T+1)/2-\Delta_{\mathrm{in}}} > 2^{(h_T +\Delta_{\mathrm{fin}}
  -n)/2}$.

Overall, summarizing both cases, we obtain
\begin{align*}
T_{\Q2}^{\mathrm{tr.\, att.}}  = & 2^{(h_T+1)/2} \max \big\{ 2^{ - \Delta_{\mathrm{in}/3}} , 2^{ - (n+1-\Delta_{\mathrm{fin}})/6} \big\} \\
& + \max  \big\{ 2^{(h_T +\Delta_{\mathrm{fin}} -n)/2}, 2^{(h_T+1)/2-\Delta_{\mathrm{in}}} \big\} \Big( C^*_{k_\mathout} + 2^{(k-h_\mathout)/2}\Big).
\end{align*}
Note that the speedup is always less than quadratic.

\section{Applications on existing ciphers}
\label{sec:application}

In this section we describe three examples of classical and quantum
differential attacks against block ciphers.  We have chosen examples
of real proposed ciphers where some of the best known attacks are simple
variants of differential cryptanalysis.  This allows us to illustrate the
important counter-intuitive points that we want to highlight, by
comparing the best classical attacks and the best quantum attacks.  We first consider the block cipher used in the authenticated encryption scheme LAC~\cite{lac}, and build for it a classical simple differential distinguisher and a more efficient classical truncated distinguisher. We quantize these attacks, and obtain that the quantum truncated distinguisher performs worse than a generic quantum exhaustive search. In the next application we consider the lightweight block cipher KLEIN~\cite{DBLP:conf/rfidsec/GongNL11}. Its 64-bit key version, KLEIN-64, has been recently broken~\cite{DBLP:conf/fse/LallemandN14} by a truncated differential last-rounds attack. When quantizing this attack, we show that it no longer works in the quantum world, and therefore KLEIN-64 is no longer broken. Finally, we consider KLEIN-96 and the best known attack~\cite{DBLP:conf/fse/LallemandN14} against this cipher. We show that its quantum variant still works in the post-quantum world, but only in the \Q2 model. 
These applications illustrate what we previously pointed out and believe
to be particularly meaningful: block ciphers with longer keys, following
the natural recommendation for resisting to generic quantum attacks, are
those for which the truncated attacks are more likely to still break the
cryptosystem in the postquantum world. Consequently, it is crucial to understand and compute the optimized quantum complexity of the different families of attacks, as we have started doing in this paper.
 
\subsection{Application 1: LAC}\label{lac}
We now show an example where a truncated differential attack is
more efficient than a simple differential attack using a classical
computer, but the opposite is true with a quantum computer.

We consider the reduced version of LBlock~\cite{DBLP:conf/acns/WuZ11} used in LAC~\cite{lac}.  According to~%
\cite{DBLP:conf/sacrypt/Leurent15}, the best known differential for the full
16 rounds has probability $2^{-61.5}$.  This yields a classical
distinguisher with complexity $2^{62.5}$ and a quantum distinguisher
with complexity $2^{31.75}$.
The corresponding truncated differential has the following
characteristics\footnote{We consider the truncated differential with
  $\mathcal{D}_{\mathrm{in}} =  \text{\tt 000000000000**0*}$ and
  $\mathcal{D}_{\mathrm{out}} = \text{\tt 0000***00000**00}$.
  If the input differential is non-zero on all active bytes, a pair
  follows the truncated differential when 14 sums of active bytes cancel
  out, and 3 sums of active bytes don't cancel out.  This gives a
  probability $(15/16)^{6} \cdot (1/15)^{14} \approx 2^{-55.3}$.}:
\begin{align*}
n &= 64 &
\Delta_{\mathin} &= 12 &
\Delta_{\mathout} &= 20 &
\tilde h_T &\approx 55.3
\end{align*}
We note that $\tilde h_T > n - \Delta_{\mathrm{out}}$, which is too large to
provide a working attack.  However, $\tilde h_T$ only
considers pairs following a given characteristic, and we expect
additional pairs to randomly give an output difference in
$\mathcal{D}_{\mathrm{out}}$.  Therefore, we estimate the probability
of the truncated differential as $2^{h_T} = 2^{-44} + 2^{-55.3}$.  In
order to check this hypothesis, we implemented a reduced version of
LAC with 3-bit APN S-Boxes, and verified that a bias can be
detected\footnote{The truncated path for the reduced version has a
  probability $2^{h_T} = 2^{-33} + 2^{-40.5}$.  We ran 32 experiments
  with $2^{31}$ structures of $2^{9}$ plaintexts each.  With a random
  function we expect about $2^{31} \cdot 2^9\cdot(2^9-1)/2 \cdot
  2^{-33} = 32704$ pairs satisfying the truncated differential, and
  about $32890$ with LAC.  The median number of pairs we found is
  33050 and it was larger than $32704$ is 31 out of 32 experiments.
  This agrees with our predictions.}.  In every structure, the
probability that a pair follows the truncated differential is $2^{23}
\cdot 2^{h_T} = 2^{-21} + 2^{-32.3}$, rather than $2^{-21}$ for a random
permutation.

As explained in Section~\ref{sec:algoquant} (Theorem~\ref{Qc}), this bias can be detected after
examining $2 \cdot 2^{-21} \cdot 2^{32.3 \cdot 2} = 2^{44.6}$
structures, \emph{i.e.}~$2^{56.6}$ plaintexts in a classical attack
(following \cite{DBLP:journals/dcc/BlondeauGT11}).  In a quantum
setting, we use quantum
counting~\cite{DBLP:conf/icalp/BrassardHT98,mosca1998quantum,MR1947332}
and examine $4 \pi \cdot 2^{-21/2} \cdot 2^{32.3} \approx 2^{25.4}$
structures, for a total cost of $2^{25.4} \cdot 2^{2/3 \cdot 12} =
2^{33.4}$.

To summarize, the best attack in the classical setting is a truncated
differential attack (with complexity $2^{60.9}$ rather than
$2^{62.5}$ for a simple differential attack), while the best attack in the quantum setting is a simple
differential attack (with complexity $2^{31.75}$ rather than
$2^{33.4}$ for a truncated differential attack).  Moreover, the quantum truncated differential attack is
actually less efficient than a generic attack using Grover's
algorithm.

\subsection{Application 2: KLEIN-64 and KLEIN-96}\label{klein}

\subsubsection{KLEIN-64}\label{klein64}

We consider exactly the attack from~\cite{DBLP:conf/fse/LallemandN14}. We omit here the details of the cipher and the truncated differential, but  provide the parameters needed to compute the complexity.

When taking into account the attack that provides the best time complexity, we have\footnote{For the attacks from~\cite{DBLP:conf/fse/LallemandN14} on KLEIN,  $h_T$ is always bigger than $n-\Delta_{\mathin}$, but the distinguisher from $\Delta_{in}$ to $\Delta_{out}$ still works exactly as described in Section~\ref{LR_T_C} because we compare with the probability of producing the truncated differential path and not just the truncated differential.}: $h_T=69.5,$ $\Delta_{\mathin}=16,$ $\Delta_{\mathfin}=32,$ $k=64$, $k_{\mathout}=32,$ $n=64$, $C_{k_{\mathout}}=2^{20}$ and $h_{\mathout}=45.$

In this case, we can recover the time and data complexities from the original result as\footnote{The slight difference with respect to \cite{DBLP:conf/fse/LallemandN14} is because here we have not taken into account the relative cost with respect to one encryption, for the sake of simplicity.} $D=2^{54.5}$ and $T=2^{54.5}+2^{57.5}+2^{56.5}= 2^{58.2}$, which is considerably faster than exhaustive search ($2^{64}$), therefore breaking the cipher. 

In the quantum scenario, the complexity of the generic exhaustive
search, which we use to measure the security, is $2^{32}$. The cipher is
considered broken if we can retrieve the key with smaller complexity.
When considering the Q2 or the Q1 case, the final term is accelerated by
square root.  On the other hand the second term has a square root in
$2^{h_T-n+\Delta_{\mathfin}}$, which is then multiplied by
$C_{k_{\mathout}}^*$. As shown in Section~\ref{sect:Ckoutq}, since $h_\mathout > k_\mathout$, we have $C_{k_{\mathout}}^* = 2^{k_{\mathout}/2}=2^{16}$ instead of
$2^{20}$. Consequently, the second term becomes $2^{34.75}$, thus the attack does not work.

 We have seen here an example of a primitive broken in the classical world, but remaining secure\footnote{We want to point out that notions “not-secure” (i.e. can be attacked in practice) and “broken” (i.e. can be attacked faster than brute-force), are not the same, though they are difficult to dissociate. } in the quantum one, for both models. 

\subsubsection{KLEIN-96}\label{klein96}

Here we consider the attack of type~III given in~\cite{DBLP:conf/fse/LallemandN14}, as it is the only one with data complexity lower than $2^{48},$ and therefore the only possible candidate for providing also an attack in the Q1 model.

The parameters of this classical attack are: $h_T=78,$ $\Delta_{\mathin}=32,$ $\Delta_\mathfin=32,$ $k_{\mathout}=48,$ $n=64$, $C_{k_{\mathout}}=2^{30}$ and $h_{\mathout}=52.$ We compute and obtain the same complexities as the original results in time and data: $D=2^{47}$ and $T=2^{47}+2^{46+30}+2^{90}.$
When quantizing this attack, we have to compare the complexities with $2^{96/2}=2^{48}.$

In the Q1 model, as the complexitites are going to be very tight, we consider the constants that take into account the cost of a computation with respect to one encryption~\footnote{For the sake of simplicity, we ignore this constants in the other cases, as they barely change the complexities.} (the unit of exhaustive search). Then, we obtain $2^{47}+2^{23+24}\cdot\frac{13}{14}+2^{45}\cdot\frac{1}{14}=2^{47.96},$  , which is lower that $2^{48}$ encryptions , so the
attack still works. The second term comes from $C_{k_{\mathout}}^* 2^{(h_T-n+\Delta_{\mathout})/2}$, where we can compute $C_{k_{\mathout}}^*$ as before, obtaining $2^{48/2}=2^{24}.$  The constants come from the fact that the computations related to the second term cover 13 rounds out of 14, as the last one is not computed but serves as matching point. The third term constant reflects that considering only one round out of the 14, many keys are already be discarded.

In the Q2 model, the first term is reduced to $2^{34.17}$ and becomes
negligible, with the final complexity at $2^{34.17}+2^{47}+2^{45} = 2^{47.3}$, showing that the cryptosystem is broken.

\section{Linear Cryptanalysis}
\label{sec:linear}

Linear cryptanalysis was discovered in 1992 by
Matsui~\cite{DBLP:conf/eurocrypt/MatsuiY92,DBLP:conf/eurocrypt/Matsui93}.
The idea of linear cryptanalysis is to approximate the round function with a linear
function, in order to find a linear approximation correlated to the non-linear encryption function~$E$.
 We describe the linear approximations using linear
masks; for instance, an approximation for one round is written as
$E^{(1)}(x)[\chi'] \approx x[\chi]$  where $\chi$ and $\chi'$ are linear masks for the
input and output, respectively, and  $x[\chi]= \bigoplus_{i:\chi_i =1}  x_i$. 
Here, ``$\approx$'' means that the probability that the two
values are equal is significantly larger than with a random permutation.

 The cryptanalyst has to build linear
approximations for each round, such that the output mask of a round is
equal to the input mask of the next round.  The piling-up lemma is then
used to evaluate the correlation of the approximation for the full
cipher. 
As for differential cryptanalysis, we assume here that the linear approximation is given and use it with a quantum computer to obtain either a distinguishing attack or a key recovery attack. 
In this section, we consider linear distinguishers and key recovery
attacks following from Matsui's work~\cite{DBLP:conf/eurocrypt/Matsui93}.

\subsection{Classical Adversary}

\subsubsection{Linear distinguisher}

In the following, $C$ denotes the ciphertext obtained when encrypting the plaintext $P$
with the key $K$.
We assume that we know a linear approximation with masks $(\chi_P,
\chi_C, \chi_K)$ and constant term $\chi_0 \in \{0,1\}$ satisfying
$\Pr \big[C[\chi_C] = P[\chi_P] \oplus K[\chi_K] \oplus \chi_0\big] = (1+\varepsilon)/2,$
with $\varepsilon \gg 2^{-n/2}$; or,
omitting the key dependency:
\begin{align*}
\Pr \big[C[\chi_C] = P[\chi_P] \big] &= (1 \pm \varepsilon)/2.
\end{align*}
An attacker can use this to distinguish
$E$ from a random permutation. 
The attack requires $D = A/\varepsilon^2$ known plaintexts $P_i$ and the corresponding ciphertexts $C_i$, where
$A$ is a small constant (\emph{e.g.} $A = 10$).  
The attacker computes the observed bias $\hat\varepsilon = | 2
\#\left\{i : C_i[\chi_C] = P_i[\chi_P] \right\}/D - 1 |$, and concludes that the data is random
if $\hat\varepsilon \leq \varepsilon/2$ and that it comes from $E$ otherwise.

If the data is generated by a
random permutation, then the expected value of $\hat\varepsilon$ is $0$, whereas, if it is
generated by $E$, the expected value of $\hat\varepsilon$ is $\varepsilon$.
We can compute the success probability of the attack assuming that the
values of $C_i[\chi_C] \oplus P_i[\chi_P]$ are identically distributed
Bernoulli random variables, with parameter $1/2$ or
$1/2 \pm \varepsilon$.  From Hoeffding's inequality, we get:
\begin{align*}
 \mathrm{Pr} \Big[\hat\varepsilon \geq \varepsilon/2\Big| \text{random permutation} \Big]
&\leq 2\exp\left(-2\frac{\varepsilon^2 }{4^2}D \right) \leq 2\exp\left(-\frac{A}{8} \right),\\
\mathrm{Pr} \Big[\hat\varepsilon \leq \varepsilon/2 \Big| \text{cipher} \ E \Big] 
&\leq \exp\left(-2\frac{\varepsilon^2 }{4^2}D \right) \leq \exp\left(-\frac{A}{8} \right);
\end{align*}
both error terms can also be made arbitrarily small by increasing $A$.

Overall, the complexity of the linear distinguisher is 
\begin{align}
D_{\mathrm{C}}^{\mathrm{lin.\, dist.}} = T_{\mathrm{C}}^{\mathrm{lin.\, dist.}} = 1/\varepsilon^2.
\end{align}
As explained in Section~\ref{sec:brute}, we do not take into account the factor $A$ that depends
on the success probability, and keep only the asymptotic term in the complexity.

\subsubsection{Key-recovery using an $r$-round approximation (Matsui's Algorithm 1)}

The linear distinguisher readily gives one key bit according to the sign of the
bias: if $K[\chi_K] = 0$, then we expect $\# \left\{i : C_i[\chi_C]
  = P_i[\chi_P] \oplus \chi_0 \right\}> D/2$.  The attack can be repeated
with different linear approximations in order to recover more key bits.
If we have $\ell$ independent linear approximations $(\chi_P^j, \chi_C^j, \chi_K^j, \chi_0^j)$ with bias at
least $\varepsilon$, the total complexity is:
\begin{align} 
D_{\mathrm{C}}^{\mathrm{Mat.}1} = 1/\varepsilon^2, \qquad T_{\mathrm{C}}^{\mathrm{Mat.}1} =  
\ell  /\varepsilon^2 + 2^{k-\ell}.
\end{align}

\subsubsection{Last-rounds attack (Matsui's Algorithm 2)}

Alternatively, linear cryptanalysis can be used in a last-rounds attack
that will often be more efficient.  Following the notations of the
previous sections, we consider a total of $R + r_{\mathrm{out}}$ rounds,
with an $R$-round linear distinguisher $(\chi_P, \chi_{C'})$ with bias $\varepsilon$,
and we use partial decryption for the last $r_{\mathrm{out}}$ rounds.

We denote by $k_{\mathrm{out}}$ the number of key bits necessary to compute
$C'[\chi_{C'}]$, where
$C' = E^{-r_{\mathrm{out}}}(C)$ from $C$.  
The attack proceeds as follows:
\begin{enumerate}[noitemsep]
\item Initialize a set of $2^{k_{\mathrm{out}}}$ counters $X_{k'}$ to zero, for each key candidate.
\item For each $(P,C)$ pair, and for every partial key guess $k'$,
  compute $C'$ from $C$ and $k'$, and increment $X_{k'}$ if $P[\chi_P]
  = C'[\chi_{C'}]$.
\item This gives $X_{k'} =  \# \big\{P,C:
  E_{k'}^{-r_{\mathrm{out}}}(C)[\chi_{C'}] = P[\chi_P]\big\}$.
\item Select the partial key $k'$ with the maximal absolute value of $X_{k'}$.
\end{enumerate}
This gives the following complexity:
\begin{align}
D_{\mathrm{C}}^{\mathrm{Mat.}2} &= 1/\varepsilon^2 &
T_{\mathrm{C}}^{\mathrm{Mat.}2} &= 2^{k_{\mathrm{out}}}/\varepsilon^2 + 2^{k-k_{\mathrm{out}}},
\end{align}
where, as before, we neglect constant factors.

We note that this algorithm can be improved using a distillation phase
where we count the number of occurrences of partial plaintexts and
ciphertexts, and an analysis phase using only these counters rather the
full data set.  In some specific cases, the analysis phase can be
improved by exploiting the Fast Fourier Transform
\cite{DBLP:conf/icisc/CollardSQ07}, but we will focus on the simpler case
described here.

\subsection{Quantum Adversary}

\subsubsection{Distinguisher in the Q2 model}
As in the previous sections, a speed-up for distinguishers is only observed for the Q2 model. 
The distinguisher is based on the quantum approximate counting algorithm of Theorem \ref{Qc}. As in the classical case, the goal is to distinguish between two Bernoulli distributions with parameter $1/2$ and $1/2 +\varepsilon$, respectively.

Using the quantum approximate counting algorithm, it is sufficient to make $O(1/\varepsilon)$ queries in order to achieve an $\varepsilon$-approximation. 
The data complexity of the quantum distinguisher is therefore, 
\begin{align}
D_{\Q2}^{\mathrm{lin.\, dist.}} = T_{\Q2}^{\mathrm{lin.\, dist.}} = 1/\varepsilon,
\end{align}
which constitutes a quadratic speed-up compared to the classical distinguisher.

\subsubsection{Key-recovery using an $r$-round approximation in the Q1 model}
Each linear relation allows the attacker to recover a bit of the key using $1/\varepsilon^2$ data, as the classical model. Once $\ell$ bits of the key have been recovered, one can apply Grover's algorithm to obtain the full key. 
For $\ell$ linear relations, the attack complexity is therefore:
\begin{align}
D_{\Q1}^{\mathrm{Mat.}1} &= \ell/\varepsilon^2 & 
T_{\Q1}^{\mathrm{Mat.}1} &= \ell/\varepsilon^2 + 2^{(k-\ell)/2}.
\end{align}

\subsubsection{Key-recovery using an $r$-round approximation in the Q2 model}
Each linear relation allows the attacker to recover a bit of the key using $1/\varepsilon$ data. 
If there are $\ell$ such relations, the attack complexity is:
\begin{align}
D_{\Q2}^{\mathrm{Mat.}1} &=\ell/\varepsilon &
T_{\Q2}^{\mathrm{Mat.}1} &= \ell/\varepsilon + 2^{(k-\ell)/2}.
\end{align}

Note that we do not \textit{a priori} obtain a quadratic improvement for the data complexity compared to the classical model. This is because the same data can be used many times in the classical model, whereas it is unclear whether something similar can be achieved using Grover's algorithm.

\subsubsection{Last-rounds attack in the Q1 model}

As usual for the Q1 model, one samples the same quantity of data as in the classical model and stores it in a quantum memory. 
Then the idea is to perform two successive instances of Grover's algorithm: the goal of the first one is to find a partial key of size $k_{\mathout}$ for which a bias $\varepsilon$ is detected for the first $R$ rounds: this has complexity $ 2^{k_{\mathrm{out}}/2}/\varepsilon$ with quantum counting; the second Grover aims at finding the rest of the key and has complexity $2^{(k-k_{\mathrm{out}})/2}$. 
Overall, the complexity of the attack is 
\begin{align}
D_{\Q1}^{\mathrm{Mat.}2} &= 1/\varepsilon^2 &
T_{\Q1}^{\mathrm{Mat.}2} &= 1/\varepsilon^2 + 2^{k_{\mathrm{out}}/2}/\varepsilon + 2^{(k-k_{\mathrm{out}})/2}.
\end{align}

\subsubsection{Last-rounds attack in the Q2 model.}
The strategy is similar, but the first step of the algorithm, \emph{i.e.}~finding the correct partial key, can be improved compared to the Q1 model. One uses a Grover search to obtain the partial key, and the checking step of Grover now consists of performing an approximate counting to detect the bias. 
Overall, the complexity of the attack is 
\begin{align}
D_{\Q2}^{\mathrm{Mat.}2} &= 2^{k_{\mathrm{out}}/2}/\varepsilon &
T_{\Q2}^{\mathrm{Mat.}2} &= 2^{k_{\mathrm{out}}/2}/\varepsilon + 2^{(k-k_{\mathrm{out}})/2}.
\end{align}

\section{Discussion}
\label{sec:discussion}

In this section, we first recall all the time complexities obtained
through the paper. The data complexities correspond to the first term of
each expression for the differential attacks. Next, we discuss how these
results affect the post-quantum security of symmetric ciphers with
respect to differential and linear attacks. 
As a remainder, notations are given in Table~\ref{tab:notations}.

\paragraph{Simple Differential Distinguishers:}
\begin{align*}
T_{\mathrm{C}}^{\mathrm{s.\, dist.}} &=  2^{h_S+1} &
T_{\Q2}^{\mathrm{s.\, dist.}} &= 2^{h_S/2+1} 
\end{align*}

\paragraph{Simple Differential Last-Rounds Attacks:}
\begin{alignat*}{2}
T_{\mathrm{C}}^{\mathrm{s.\, att.}} &=2^{h_S+1} &&+2^{h_S + \Delta_{\mathfin} - n} \Big(C_{k_{\mathrm{out}}}  +  2^{k - h_{\mathrm{out}}}\Big) \\
T_{\Q1}^{\mathrm{s.\, att.}} &=2^{h_S+1} &&+2^{(h_S + \Delta_\mathfin - n)/2} \Big( C^*_{k_{out}} + 2^{(k - h_\mathout)/2} \Big) \\
T_{\Q2}^{\mathrm{s.\, att.}} &= 2^{h_S/2+1} &&+2^{(h_S + \Delta_\mathfin - n)/2} \Big( C^*_{k_{out}} + 2^{(k - h_\mathout)/2} \Big) 
\end{alignat*}

\paragraph{Truncated Differential Distinguishers:}
\begin{align*}
T_{\mathrm{C}}^{\mathrm{tr.\, dist.}}&=\max\{2^{(h_T+1)/2},2^{h_T-\Delta_{\mathrm{in}}+1}\} &
T_{\Q2}^{\mathrm{tr.\, dist.}} &= \max \big\{ 2^{(h_T+1)/3}, 2^{(h_T+1)/2 - \Delta_{\mathrm{in}} /3} \big\} 
\end{align*}

\paragraph{Truncated Differential Last-Rounds Attacks:}
\begin{alignat*}{3}
T_{\mathrm{C}}^{\mathrm{tr.\, att.}}&=
\max\big\{2^{(h_T+1)/2},2^{h_T-\Delta_{\mathrm{in}}+1}\big\} &&+2^{h_T+\Delta_{\mathfin}-n} &&\Big( C_{k_{\mathrm{out}}}+2^{k-h_{\mathrm{out}}} \Big)\\
T_{\Q1}^{\mathrm{tr.\, att.}} &=
\max \big\{2^{(h_T+1)/2},  2^{h_T - \Delta_\mathin +1}\big\} &&+
2^{(h_T + \Delta_\mathfin - n)/2} && \Big ( C^*_{k_{out}} + 2^{(k - h_\mathout)/2} \Big) \\
T_{\Q2}^{\mathrm{tr.\, att.}}  &=  2^{(h_T+1)/2} \max \big\{ 2^{ - \Delta_{\mathrm{in}/3}} , 2^{ - (n+1-\Delta_{\mathrm{fin}})/6} \big\}  &&+ \max  \big\{ 2^{(h_T +\Delta_{\mathrm{fin}} -n)/2}, 2^{(h_T+1)/2-\Delta_{\mathrm{in}}} \big\}&& \Big( C^*_{k_\mathout} + 2^{(k-h_\mathout)/2}\Big)
\end{alignat*}

\paragraph{Linear Distinguishers:}
\begin{align*}
T_{\mathrm{C}}^{\mathrm{lin.\, dist.}} &=  1/\varepsilon^2 &
T_{\Q2}^{\mathrm{lin.\, dist.}} &= 1/\varepsilon 
\end{align*}

\paragraph{Linear Attacks:}
\begin{align*}
T_{\mathrm{C}}^{\mathrm{Mat.}1} &= \ell  /\varepsilon^2 + 2^{k-\ell} &
T_{\mathrm{C}}^{\mathrm{Mat.2}} &= 2^{k_{\mathrm{out}}}/\varepsilon^2 + 2^{k-k_{\mathrm{out}}} \\
T_{\Q1}^{\mathrm{Mat.}1} &= \ell/\varepsilon^2 + 2^{(k-\ell)/2} &
T_{\Q1}^{\mathrm{Mat.2}} &= 1/\varepsilon^2 + 2^{k_{\mathrm{out}}/2}/\varepsilon+ 2^{(k-k_{\mathrm{out}})/2}.\\
T_{\Q2}^{\mathrm{Mat.}1} &= \ell/\varepsilon + 2^{(k-\ell)/2} &
T_{\Q2}^{\mathrm{Mat.2}} &=2^{k_{\mathrm{out}}/2}/\varepsilon + 2^{(k-k_{\mathrm{out}})/2}
\end{align*}

The first observation we make is that the cost of a quantum differential
or linear attack
is at least the square root of the cost of the corresponding classical
attack.  In particular, if a block cipher is resistant to classical
differential and/or linear cryptanalysis (\emph{i.e.}~classical attacks
cost at least $2^k$), it is also resistant to the corresponding quantum
cryptanalysis (\emph{i.e.}~quantum differential and/or linear attacks
cost at least $2^{k/2}$). However, a quadratic speed-up is not always
possible with our techniques; in particular truncated attacks might be less
accelerated than simple differential ones.

\paragraph{Q1 model \emph{vs} Q2 model.}
We have studied quantum cryptanalysis with the notion of standard
security (Q1 model with only classical encryption queries) and quantum
security (Q2 model with quantum superposition queries).  As expected,
the Q2 model is stronger, and we often have a smaller quantum
acceleration in the Q1 model.  In particular, the data complexity of
attack in the Q1 model is the same as the data complexity of classical
attacks.
Still, there are important cases where quantum differential or linear
cryptanalysis can be more efficient than Grover's search in the Q1 model,
which shows that quantum cryptanalysis is also relevant in the more
realistic setting with only classical queries.

\paragraph{Quantum differential and linear attacks are more threatening to
  ciphers with larger key sizes.}
Though it seems counter-intuitive, the fact is that larger key sizes
also mean higher security claims to consider a cipher as secure.  In the
complexity figure given above, the terms that depend on the key size
(the right hand size terms) are likely to be the bottleneck for ciphers
with long keys with respect to the internal state size.  In all the
attacks studied here, this term is quadratically improved using quantum
computation, in both models.
Therefore, attacks against those ciphers will get the
most benefits from quantum computers.  We illustrated this effect in
Section~\ref{klein}, by studying KLEIN with two different key
sizes.

This effect is very strong in the Q1 model because most attacks have a
data complexity larger than $2^{n/2}$ (because $h_S >
n/2$, $h_T > n/2$, or $\varepsilon < 2^{-n/4}$).  If
the keysize is equal to $n$, this makes those attacks less efficient
than Grover's search, but they become interesting when $k$ is larger than
$n$.  In particular, with $k \ge 2n$, the data complexity is always
smaller than $2^{k/2}$.

This observation is particularly relevant because the recommended
strategy against quantum adversaries is to use longer keys~\cite{augot2015initial}.
We show that with this strategy, it is likely that classical
attacks that break the cryptosystem lead to quantum attacks that also
break it, even in the Q1 model where the adversary only makes classical
queries to the oracle.

\paragraph{The best attack might change from the classical to the
  quantum world.} 
Since truncated differential attacks use collision finding in the data
analysis step, they do not enjoy a quadratic improvement in the quantum setting.
Therefore, as we show in Section~\ref{lac}, a truncated
differential attack might be the best known attack in the classical
world, while the simple differential might become the best in the
quantum world. In particular, simply quantizing the best known attack does not
ensure obtaining the best possible attack in the post-quantum world,
which emphasizes the importance of studying quantum symmetric
cryptanalysis.

More strikingly, there are cases where differential attacks are more efficient than brute
force in the classical world, but quantum differential attacks are not
faster than Grover's algorithm, as we show in the example of
Section~\ref{klein64}.

\section{Conclusion and open questions}
\label{sec:conclusion}

Our work is an important step towards building a quantum symmetric cryptanalysis toolbox. Our results have corroborated our first intuition that symmetric cryptography does not seem ready for the post-quantum world. This not a direct conclusion from the paper, though indirectly the first logical approach for quantum symmetric cryptanalysis would be to quantize the best classical attack, and that would simplify the task. As we know for sure applications where the best attacks might change exist, cryptanalysis must be started anew.
The non-intuitive behaviors shown in our examples of applications help to illustrate the importance of understanding how symmetric attacks work in the quantum world, and therefore, of our results.
For building trust against quantum adversaries, this work should be extended, and other classical attacks should be investigated. Indeed, we have concluded that quantizing the best known classical differential attacks may not give the best quantum attack. This emphasizes the importance of studying and finding the best quantum attacks, including all known families of cryptanalysis. 

We have devised quantum attacks that break
classical cryptosystems faster than a quantum exhaustive search. 
However, the quantum-walk-based techniques used here can only lead to polynomial speed-ups, and the largest gap is quadratic, achieved by Grover's algorithm. Although this is significant,
it can not be interpreted as a collapse of cryptography against quantum adversaries similar to public-key cryptography based
on the hardness of factoring. However, we already mentioned that attacks based on the quantum Fourier transform,
which is at the core of Shor's algorithm for factoring and does not fall in the framework of quantum walks, have been found
for symmetric ciphers~\cite{5513654,6400943,roetteler2015note,simoncrypto}. 
\\

We end by mentioning a few open questions that we leave for future work.
In this work, we have studied quantum versions of differential and linear cryptanalysis. In each of these cases, we were either given a differential characteristics or a linear approximation to begin with, and used quantum algorithms to exploit them to perform a key recovery attack for instance. A natural question is whether quantum computers can also be useful to come up with good differential characteristics or linear approximations in the first place.

So far, we have only scratched the surface of linear cryptanalysis by quantizing the simplest versions of classical attacks, that is excluding more involved constructions using counters or the fast Fourier transform. Of course, since the quantum Fourier transform offers a significant speed-up compared to its classical counterpart, it makes sense to investigate whether it can be used to obtain more efficient quantum linear cryptanalysis. 

A major open question in the field of quantum cryptanalysis is certainly the choice of the right model of attack. In this work, we investigated two such models. The  Q2 model might appear rather extreme and perhaps even unrealistic since it is unclear why an attacker could access the cipher in superposition. But this model has the advantage of consistency. Also, a cipher secure in this model will remain secure in any setting. On the other hand, the Q1 model appears more realistic, but might be a little bit too simplistic. In particular, it seems important to better understand the interface between the classical register that stores the data that have been obtained by querying the cipher and the quantum register where they must be transferred in order to be further processed by the quantum computer.

\section*{Acknowledgements}
This work was supported by the Commission of the European Communities
through the Horizon 2020 program under project number 645622 PQCRYPTO.
MK acknowledges funding through grants 
ANR-12-PDOC-0022-01 and ESPRC EP/N003829/1.


\begin{thebibliography}{ZWW{\etalchar{+}}14}

\bibitem[ABB{\etalchar{+}}15]{augot2015initial}
Daniel Augot, Lejla Batina, Daniel~J Bernstein, Joppe Bos, Johannes Buchmann,
  Wouter Castryck, Orr Dunkelman, Tim G{\"u}neysu, Shay Gueron, Andreas
  H{\"u}lsing, et~al.
\newblock Initial recommendations of long-term secure post-quantum systems.
\newblock {\em Available at
  \url{http://pqcrypto.eu.org/docs/initial-recommendations.pdf}}, 2015.

\bibitem[Amb07]{DBLP:journals/siamcomp/Ambainis07}
A.~Ambainis.
\newblock Quantum walk algorithm for element distinctness.
\newblock {\em {SIAM} J. Comput.}, 37(1):210--239, 2007.

\bibitem[ATTU16]{DBLP:conf/pqcrypto/AnandTTU16}
Mayuresh~Vivekanand Anand, Ehsan~Ebrahimi Targhi, Gelo~Noel Tabia, and
  Dominique Unruh.
\newblock Post-quantum security of the {CBC}, {CFB}, {OFB}, {CTR}, and {XTS}
  modes of operation.
\newblock In Tsuyoshi Takagi, editor, {\em Post-Quantum Cryptography - 7th
  International Workshop, PQCrypto 2016, Fukuoka, Japan, February 24-26, 2016,
  Proceedings}, volume 9606 of {\em Lecture Notes in Computer Science}, pages
  44--63. Springer, 2016.

\bibitem[BGT11]{DBLP:journals/dcc/BlondeauGT11}
C{\'{e}}line Blondeau, Beno{\^{\i}}t G{\'{e}}rard, and Jean{-}Pierre Tillich.
\newblock Accurate estimates of the data complexity and success probability for
  various cryptanalyses.
\newblock {\em Des. Codes Cryptography}, 59(1-3):3--34, 2011.

\bibitem[BHMT02]{MR1947332}
Gilles Brassard, Peter H{\o}yer, Michele Mosca, and Alain Tapp.
\newblock Quantum amplitude amplification and estimation.
\newblock In {\em Quantum computation and information ({W}ashington, {DC},
  2000)}, volume 305 of {\em Contemp. Math.}, pages 53--74. Amer. Math. Soc.,
  Providence, RI, 2002.

\bibitem[BHT98]{DBLP:conf/icalp/BrassardHT98}
Gilles Brassard, Peter H{\o}yer, and Alain Tapp.
\newblock Quantum counting.
\newblock In Kim~Guldstrand Larsen, Sven Skyum, and Glynn Winskel, editors,
  {\em Automata, Languages and Programming, 25th International Colloquium,
  ICALP'98, Aalborg, Denmark, July 13-17, 1998, Proceedings}, volume 1443 of
  {\em Lecture Notes in Computer Science}, pages 820--831. Springer, 1998.

\bibitem[BS90]{DBLP:conf/crypto/BihamS90}
Eli Biham and Adi Shamir.
\newblock Differential cryptanalysis of {DES}-like cryptosystems.
\newblock In Alfred Menezes and Scott~A. Vanstone, editors, {\em Advances in
  Cryptology - {CRYPTO} '90, 10th Annual International Cryptology Conference,
  Santa Barbara, California, USA, August 11-15, 1990, Proceedings}, volume 537
  of {\em Lecture Notes in Computer Science}, pages 2--21. Springer, 1990.

\bibitem[BZ13a]{DBLP:conf/eurocrypt/BonehZ13}
Dan Boneh and Mark Zhandry.
\newblock Quantum-secure message authentication codes.
\newblock In Thomas Johansson and Phong~Q. Nguyen, editors, {\em Advances in
  Cryptology - {EUROCRYPT} 2013, 32nd Annual International Conference on the
  Theory and Applications of Cryptographic Techniques, Athens, Greece, May
  26-30, 2013. Proceedings}, volume 7881 of {\em Lecture Notes in Computer
  Science}, pages 592--608. Springer, 2013.

\bibitem[BZ13b]{DBLP:conf/crypto/BonehZ13}
Dan Boneh and Mark Zhandry.
\newblock Secure signatures and chosen ciphertext security in a quantum
  computing world.
\newblock In Ran Canetti and Juan~A. Garay, editors, {\em Advances in
  Cryptology - {CRYPTO} 2013 - 33rd Annual Cryptology Conference, Santa
  Barbara, CA, USA, August 18-22, 2013. Proceedings, Part {II}}, volume 8043 of
  {\em Lecture Notes in Computer Science}, pages 361--379. Springer, 2013.

\bibitem[CSQ07]{DBLP:conf/icisc/CollardSQ07}
Baudoin Collard, Fran{\c{c}}ois{-}Xavier Standaert, and Jean{-}Jacques
  Quisquater.
\newblock Improving the time complexity of matsui's linear cryptanalysis.
\newblock In Kil{-}Hyun Nam and Gwangsoo Rhee, editors, {\em Information
  Security and Cryptology - {ICISC} 2007, 10th International Conference, Seoul,
  Korea, November 29-30, 2007, Proceedings}, volume 4817 of {\em Lecture Notes
  in Computer Science}, pages 77--88. Springer, 2007.

\bibitem[DFNS13]{DBLP:conf/icits/DamgardFNS13}
Ivan Damg{\aa}rd, Jakob Funder, Jesper~Buus Nielsen, and Louis Salvail.
\newblock Superposition attacks on cryptographic protocols.
\newblock In Carles Padr{\'{o}}, editor, {\em Information Theoretic Security -
  7th International Conference, {ICITS} 2013, Singapore, November 28-30, 2013,
  Proceedings}, volume 8317 of {\em Lecture Notes in Computer Science}, pages
  142--161. Springer, 2013.

\bibitem[GHS15]{gagliardoni2015semantic}
Tommaso Gagliardoni, Andreas H{\"u}lsing, and Christian Schaffner.
\newblock Semantic security and indistinguishability in the quantum world.
\newblock {\em arXiv preprint arXiv:1504.05255}, 2015.

\bibitem[GNL12]{DBLP:conf/rfidsec/GongNL11}
Zheng Gong, Svetla Nikova, and Yee~Wei Law.
\newblock {KLEIN:} {A} new family of lightweight block ciphers.
\newblock In {\em {RFID.} Security and Privacy - 7th International Workshop,
  RFIDSec 2011, Amherst, USA, June 26-28, 2011, Revised Selected Papers},
  volume 7055 of {\em Lecture Notes in Computer Science}, pages 1--18.
  Springer, 2012.

\bibitem[Gro96]{DBLP:conf/stoc/Grover96}
Lov~K. Grover.
\newblock A fast quantum mechanical algorithm for database search.
\newblock In Gary~L. Miller, editor, {\em Proceedings of the Twenty-Eighth
  Annual {ACM} Symposium on the Theory of Computing, Philadelphia,
  Pennsylvania, USA, May 22-24, 1996}, pages 212--219. {ACM}, 1996.

\bibitem[JKM13]{nestedQW}
S.~Jeffery, R.~Kothari, and F.~Magniez.
\newblock Nested quantum walks with quantum data structures.
\newblock In {\em Proceedings of 24th {AMC-SIAM} symposium on discrete
  algorithms}, 2013.

\bibitem[Kap14]{DBLP:journals/corr/Kaplan14}
Marc Kaplan.
\newblock Quantum attacks against iterated block ciphers.
\newblock {\em CoRR}, abs/1410.1434, 2014.

\bibitem[KLLN16]{simoncrypto}
Marc Kaplan, Ga{\"{e}}tan Leurent, Anthony Leverrier, and Mar{\'{\i}}a
  Naya{-}Plasencia.
\newblock Breaking symmetric cryptosystems using quantum period finding.
\newblock In Matthew Robshaw and Jonathan Katz, editors, {\em Advances in
  Cryptology - {CRYPTO} 2016 - 36th Annual International Cryptology Conference,
  Santa Barbara, CA, USA, August 14-18, 2016, Proceedings, Part {II}}, volume
  9815 of {\em Lecture Notes in Computer Science}, pages 207--237. Springer,
  2016.

\bibitem[KM10]{5513654}
H.~Kuwakado and M.~Morii.
\newblock Quantum distinguisher between the 3-round {Feistel} cipher and the
  random permutation.
\newblock In {\em Information Theory Proceedings (ISIT), 2010 IEEE
  International Symposium on}, pages 2682--2685, June 2010.

\bibitem[KM12]{6400943}
H.~Kuwakado and M.~Morii.
\newblock Security on the quantum-type {Even-Mansour} cipher.
\newblock In {\em Information Theory and its Applications (ISITA), 2012
  International Symposium on}, pages 312--316, Oct 2012.

\bibitem[Knu94]{DBLP:conf/fse/Knudsen94}
Lars~R. Knudsen.
\newblock Truncated and higher order differentials.
\newblock In Bart Preneel, editor, {\em Fast Software Encryption: Second
  International Workshop. Leuven, Belgium, 14-16 December 1994, Proceedings},
  volume 1008 of {\em Lecture Notes in Computer Science}, pages 196--211.
  Springer, 1994.

\bibitem[Leu15]{DBLP:conf/sacrypt/Leurent15}
Ga{\"{e}}tan Leurent.
\newblock Differential forgery attack against {LAC}.
\newblock In Orr Dunkelman and Liam Keliher, editors, {\em Selected Areas in
  Cryptography - {SAC} 2015 - 22nd International Conference, Sackville, NB,
  Canada, August 12-14, 2015, Revised Selected Papers}, volume 9566 of {\em
  Lecture Notes in Computer Science}, pages 217--224. Springer, 2015.

\bibitem[LN14]{DBLP:conf/fse/LallemandN14}
Virginie Lallemand and Mar{\'{\i}}a Naya{-}Plasencia.
\newblock Cryptanalysis of {KLEIN}.
\newblock In Carlos Cid and Christian Rechberger, editors, {\em Fast Software
  Encryption - 21st International Workshop, {FSE} 2014, London, UK, March 3-5,
  2014. Revised Selected Papers}, volume 8540 of {\em Lecture Notes in Computer
  Science}, pages 451--470. Springer, 2014.

\bibitem[Mat93]{DBLP:conf/eurocrypt/Matsui93}
Mitsuru Matsui.
\newblock Linear cryptanalysis method for {DES} cipher.
\newblock In Tor Helleseth, editor, {\em Advances in Cryptology - {EUROCRYPT}
  '93, Workshop on the Theory and Application of of Cryptographic Techniques,
  Lofthus, Norway, May 23-27, 1993, Proceedings}, volume 765 of {\em Lecture
  Notes in Computer Science}, pages 386--397. Springer, 1993.

\bibitem[Mos98]{mosca1998quantum}
Michele Mosca.
\newblock Quantum searching, counting and amplitude amplification by
  eigenvector analysis.
\newblock In {\em MFCS'98 workshop on Randomized Algorithms}, pages 90--100,
  1998.

\bibitem[MY92]{DBLP:conf/eurocrypt/MatsuiY92}
Mitsuru Matsui and Atsuhiro Yamagishi.
\newblock A new method for known plaintext attack of {FEAL} cipher.
\newblock In {\em Advances in Cryptology - {EUROCRYPT} '92, Workshop on the
  Theory and Application of of Cryptographic Techniques, Balatonf{\"{u}}red,
  Hungary, May 24-28, 1992, Proceedings}, pages 81--91, 1992.

\bibitem[RS15]{roetteler2015note}
Martin Roetteler and Rainer Steinwandt.
\newblock A note on quantum related-key attacks.
\newblock {\em Information Processing Letters}, 115(1):40--44, 2015.

\bibitem[San08]{santha2008quantum}
Miklos Santha.
\newblock Quantum walk based search algorithms.
\newblock In {\em Theory and Applications of Models of Computation}, pages
  31--46. Springer, 2008.

\bibitem[Sho97]{DBLP:journals/siamcomp/Shor97}
P.~W. Shor.
\newblock Polynomial-time algorithms for prime factorization and discrete
  logarithms on a quantum computer.
\newblock {\em {SIAM} J. Comput.}, 26(5):1484--1509, 1997.

\bibitem[Sim97]{simon1997power}
Daniel~R Simon.
\newblock On the power of quantum computation.
\newblock {\em SIAM journal on computing}, 26(5):1474--1483, 1997.

\bibitem[SS16]{SS}
Thomas Santoli and Christian Schaffner.
\newblock Using simon's algorithm to attack symmetric-key cryptographic
  primitives.
\newblock {\em arXiv preprint arXiv:1603.07856}, 2016.

\bibitem[WZ11]{DBLP:conf/acns/WuZ11}
Wenling Wu and Lei Zhang.
\newblock Lblock: {A} lightweight block cipher.
\newblock In {\em Applied Cryptography and Network Security - 9th International
  Conference, {ACNS} 2011, Nerja, Spain, June 7-10, 2011. Proceedings}, volume
  6715 of {\em Lecture Notes in Computer Science}, pages 327--344, 2011.

\bibitem[Zal99]{PhysRevA.60.2746}
Christof Zalka.
\newblock Grover's quantum searching algorithm is optimal.
\newblock {\em Phys. Rev. A}, 60:2746--2751, Oct 1999.

\bibitem[Zha12]{DBLP:conf/focs/Zhandry12}
Mark Zhandry.
\newblock How to construct quantum random functions.
\newblock In {\em 53rd Annual {IEEE} Symposium on Foundations of Computer
  Science, {FOCS} 2012, New Brunswick, NJ, USA, October 20-23, 2012}, pages
  679--687. {IEEE} Computer Society, 2012.

\bibitem[ZLZS15]{Zhou2015}
Qing Zhou, Songfeng Lu, Zhigang Zhang, and Jie Sun.
\newblock {Quantum differential cryptanalysis}.
\newblock {\em Quantum Information Processing}, 14(6):2101--2109, 2015.

\bibitem[ZWW{\etalchar{+}}14]{lac}
Lei Zhang, Wenling Wu, Yanfeng Wang, Shengbao Wu, and Jian Zhang.
\newblock {LAC: A Lightweight Authenticated Encryption Cipher}.
\newblock Submission to CAESAR. Available from:
  \url{http://competitions.cr.yp.to/round1/lacv1.pdf} (v1), March 2014.

\end{thebibliography}
\newcommand{\etalchar}[1]{$^{#1}$}

\end{document}